\definecolor{darkblue}{rgb}{0,0.08,0.45}
\newcommand{\full}[2]{#1}
\newcommand{\treew}{tree-width}
\newcommand{\minorfr}{minor-free}
\newcommand{\dist}{\text{dist}}
\newcommand{\ceil}[1]{\lceil #1 \rceil}
\newcommand{\abs}[1]{\left| #1 \right|}
\newcommand{\comment}[1]{}
\newcommand{\set}[1]{\{#1\}}
\newcommand{\invackermann}{\alpha}
\newcommand{\Patrascu}{P{\v a}tra{\c s}cu}
\def\genusprepro{O(n(\lg n)(g^3+\lg n))}
\def\genusaltprepro{O(n(\lg n)(g/\epsilon+\lg n))}
\def\genusfastqprepro{O(n(\lg n)^3\epsilon^{-2}+n(\lg n)g/\epsilon)}
\newcommand{\last}{\text{last}}
\newcommand{\used}{\mu}
\newcommand{\length}{\mbox{length}}
\begin{document}

\title{Linear-Space Approximate Distance Oracles for\\ Planar, Bounded-Genus\full{,}{} and Minor-Free Graphs\full{\thanks{An extended abstract is to appear in the Proceedings of the 38th International Colloquium on Automata, Languages and Programming (ICALP 2011)}}{\thanks{An extended version can be found online~\cite{TR}.}}}
\author{Ken-ichi Kawarabayashi \inst{1}
\and Philip N.~Klein\inst{2}
\and Christian Sommer\inst{3}
} \institute{NII, Tokyo, Japan
\and Brown U, Providence RI
\and MIT, Cambridge MA} \maketitle

\begin{abstract}

  A $(1+\epsilon)$--approximate distance oracle for a graph is a data
  structure that supports approximate point-to-point
  shortest-path-distance queries.  The most relevant measures for a
  distance-oracle construction are: space, query time, and
  preprocessing time.

There are strong distance-oracle constructions known for planar
  graphs (Thorup, JACM'04) and, subsequently, minor-excluded graphs (Abraham
  and Gavoille, PODC'06).  However, these require $\Omega(\epsilon^{-1} n \lg
  n)$ space for $n$--node graphs.  

\full{
  We argue that a very low space requirement is essential.  Since
  modern computer architectures involve hierarchical memory (caches,
  primary memory, secondary memory), a high memory requirement in
  effect may greatly increase the actual running time.  Moreover, we
  would like data structures that can be deployed on small mobile
  devices, such as handhelds, which have relatively small primary
  memory.  
}{}

  In this paper,  for planar graphs, bounded-genus graphs, and
  minor-excluded graphs we give distance-oracle constructions that
  require only 
  $O(n)$ space.  The big $O$ hides only a fixed constant, independent of
  $\epsilon$ and independent of genus or size of an excluded minor.
  The preprocessing times for our distance oracle are also faster
  than those for the previously known constructions.  For planar
  graphs, the preprocessing time is $O(n \lg^2 n)$.
  However, our constructions have slower query times.  For
  planar graphs, the query time is $O(\epsilon^{-2} \lg^2 n)$.  

\full{
  For bounded-genus graphs, there was previously no 
  distance-oracle construction known other than the one implied by the
  minor-excluded construction, for which the constant is enormous and
  the preprocessing time is a high-degree polynomial.  In our result,
  the query time is $O(\epsilon^{-2}(\lg n + g)^2)$ and the preprocessing 
  time is $\genusprepro$.
 }{} 

  For all \full{these}{our} linear-space results, we can in fact ensure, for any
 $\delta>0$, that the space required is only $1+\delta$ times the
 space required just to represent the graph itself.

\end{abstract}

\full{\newpage}{}

\section{Introduction}

A {\em $(1+\epsilon)$--approximate distance oracle} for a graph is a data structure
that supports point-to-point approximate distance queries.  A
distance-oracle construction for a family of graphs has three
complexity measures:
\begin{itemize}
\item {\em preprocessing time:} time to build the data structure,
\item {\em space:} how much space is occupied by the data structure, and
\item {\em query time:} how long does it take for a query to be
  answered.
\end{itemize}
Each of these quantities might depend on the {\em stretch} parameter
$1+\epsilon$ (which is defined as the maximum ratio over all pairs of 
nodes of the query output divided by the length of a shortest path) 
as well as the size of the graph.

\full{\paragraph{General graphs}}{}
For general graphs, for stretch less than~2, no approximate
distance oracle is known that achieves subquadratic space and
sublinear query time.  \full{(In Section~\ref{sec:general-graphs}, we
briefly survey work on general graphs.)  }{}

\full{\paragraph{Restricted graph families}}{}
The only known constructions
that achieve $(1+\epsilon)$ stretch are for restricted families of graphs:
planar graphs~\cite{ThorupJACM04}, minor-excluded
graphs~\cite{AbrahamG06}, and graphs of low doubling
dimension~\cite{conf/stoc/Talwar04,MendelHarPeled,SlivkinsPODC05JournalVersion,journals/corr/abs-1008-1480}.  
\full{(In Section~\ref{sec:previous-work}, we
survey previously known results in this area.)}{}

\full{
Fortunately, such graphs arise in applications, e.g. relating to road
maps.  Distance oracles can be used in finding nearby points of
interest, in navigation and route-planning, and in algorithms for
solving other optimization problems such as vehicle routing.  
}{}

\full{\paragraph{Space requirements}}{}
One obstacle to the widespread adoption of this technique may have
been the space requirements of known distance oracles.  Even the most
compact distance oracle of Thorup~\cite{ThorupJACM04} requires $\Omega(\epsilon^{-1} n \lg
  n)$ space for $n$--node graphs.  Even though the constant is quite
  modest, the storage required is rather large~\cite{distance-oracle-experiment}.

\full{
Since modern computer architectures involve hierarchical memory (caches,
primary memory, secondary memory), a high memory requirement in
effect may greatly increase the actual query time.  If the distance
oracle could fit in cache, the query time could be much faster than
if secondary or primary memory must be accessed.  For smaller, less
powerful mobile devices such as handhelds, the problem is exacerbated.}{}

\subsection*{Our contribution}
In this paper, for every family of graphs for which a nontrivial
$(1+\epsilon)$--approximate distance oracle is known (planar, bounded genus, 
$H$--\minorfr, bounded doubling dimension), we give such a
distance oracle that in addition requires only linear space.\footnote{
Our results for bounded-doubling-dimension graphs hold only for unit 
lengths.}  In fact,
for any $\delta >0$, there is such a distance oracle whose space
requirement is only $1+\delta$ times the space required just to store
the graph itself; thus the overhead due to the distance oracle is in
essence negligible.

We achieve this while increasing the query time by a factor that is
almost proportional to the decrease in space.\footnote{The product of 
space times query time for our oracle is $O(n\epsilon^{-2}\lg^2n)$ while
that same product is $O(n\epsilon^{-2}\lg n)$ for~\cite{ThorupJACM04}} 
For planar graphs, the
query time of our oracle is $O(\epsilon^{-2}\lg^2n)$.  Although
the query time for our constructions is slower than that for the
superlinear-space constructions, the increase may be partly
made up for by the decrease in actual time due to better memory
performance (because of the memory hierarchy).  The space/query-time
tradeoff is tunable, so the construction can be adapted to a
particular architecture.

\full{The preprocessing time is also faster for our schemes than for the
superlinear-space constructions.}{}

\full{\paragraph{Bounded-genus graphs}}{}
For bounded-genus graphs, there was
previously no distance-oracle construction known other than that
implied by the minor-excluded construction, for which the constant is
enormous and the preprocessing time is a high-degree polynomial.  We
give a more efficient construction tailored to graphs of genus~$g$.    \full{For
our linear-space oracle, the query time is $O(\epsilon^{-2}(\lg n + g)^2)$ 
and the preprocessing time is $\genusprepro$. (There is also an alternative construction 
with preprocessing time $\genusaltprepro$.)
We also provide an approximate distance oracle using  space $O(n\epsilon^{-1}(\lg n+g))$
but having  faster query time $O(g/\epsilon)$ (Theorem~\ref{thm:genusfastquerythm}).}{}

\full{
\paragraph{Bounded-doubling-dimension graphs}
We also provide a linear-space approximate distance oracle for unit-length graphs with 
bounded doubling dimension (Theorem~\ref{thm:doublingthm}) using an approach that 
does not require separators. The query time of our construction is 
$\epsilon^{-O(\alpha^2)} \cdot (\lg n)^{O(\alpha)}$. 
}{}

\full{\paragraph{Summary of our results}}{}
A summary of our results is given in Table~\ref{tab:our-results}. 

\begin{table*}[h!]
\begin{center}
\begin{tabular}{|l  | l | l | l | l|}
\hline
Graph Class & Preprocessing & Query &  \\
\hline
\hline
Planar Undirected & $O(n\lg^2n)$ & $O(\epsilon^{-2}(\lg n)^2)$ & Theorem~\ref{thm:main}\\
Planar Directed & $O(n(\lg(nN))(\lg n)^3\epsilon^{-2})$ & $O((\epsilon^{-1}(\lg n)(\lg(nN)))^{2})$ & \full{Section~\ref{sec:planar:extensions}}{\cite{TR}} \\
$\ \ \ $Reachability Oracle &$O(n\lg n)$ & $O(\lg^2n)$ &\full{Section~\ref{sec:planar:extensions}}{\cite{TR}} \\
Genus~$g$ & $\genusprepro$ & $O(\epsilon^{-2}(\lg n + g)^2)$ & Theorem~\ref{thm:genusthm}\\
$H$--minor-free & $O(poly(n,\epsilon))$ & $O(\epsilon^{-2}(\lg n)^2)$ & \full{Theorem~\ref{thm:minorthm}}{\cite{TR}}\\
$\alpha$--doubling, unit lengths & $\epsilon^{-O(\alpha)}O(poly(n))$ & $\epsilon^{-O(\alpha^2)} \cdot (\lg n)^{O(\alpha)}$ & \full{Theorem~\ref{thm:doublingthm}}{\cite{TR}}\\
\hline
\end{tabular}
\end{center}
\caption{Time complexities of our linear-space $(1+\epsilon)$--approximate distance oracles. $N$~denotes the largest integer weight.
Due to space restrictions, some are in~\cite{TR}. 
\label{tab:our-results}}
\end{table*}

\section{Previous work on approximate distance oracles}

\full{\subsection{General and sparse graphs}}{\paragraph{General and sparse graphs}}
\label{sec:general-graphs}
Thorup and Zwick~\cite{ThorupZwick2005} gave asymptotically almost
optimal trade-offs for distance oracles for general undirected graphs, 
proving that for any graph and for any integer $k$ there is a $(2k-1)$--approximate
distance oracle using space $O(kn^{1+1/k})$ and query time
$O(k)$. They also prove that, if stretch strictly less than $2k+1$ is
desired, then $\Omega(n^{1+1/k})$~bits of space are necessary. A
slightly weaker lower bound holds for sparse
graphs: Sommer, Verbin, and Yu~\cite{SparseDO} prove that a distance oracle with
stretch $k$ and query time $t$ requires space $n^{1+\Omega(1/(kt))}$ (up
to poly-logarithmic factors). \full{Tight with respect to this bound, the distance oracle 
of Mendel and Naor~\cite{MendelN07} has query
time $O(1)$ and stretch $O(k)$ using space $O(n^{1+1/k})$.}{} The oracle with 
the best stretch factor is by \Patrascu\ and Roditty~\cite{PatrascuRoditty}, who 
recently gave a 2--approximate distance oracle
using space $O(n^{5/3})$ on sparse graphs. Distance oracles with stretch 
strictly less than 2 have not been achieved for general graphs.

\full{\subsection{Restricted graph classes: planar, excluded-minor, 
 and bounded-doubling-dimension graphs}}{\paragraph{Restricted graph classes}}
 \label{sec:previous-work}
 
For restricted classes of graphs, better distance oracles 
are known and stretch $1+\epsilon$ can be achieved. 

\full{\paragraph{Planar graphs}}{}
Thorup~\cite{ThorupJACM04} presents efficient
$(1+\epsilon)$--approximate distance oracles for planar digraphs.
(There is a slight improvement~\cite{conf/soda/Klein05} to the
preprocessing time for one case.)
Table~\ref{tab:planardoresults} lists these results. 

There are also many results on exact distance oracles for planar
graphs. The best is that of Fakcharoenphol and Rao~\cite{journals/jcss/FakcharoenpholR06} 
and its subsequent improvements~\cite{conf/soda/Klein05,journals/talg/KleinMW10,esa/MozesW10} and 
variants~\cite{journals/corr/abs-1011-5549,Nussbaum10}.  
Faster per-query time can be achieved by using more 
space~\cite{conf/soda/Cabello06,journals/corr/abs-1011-5549}.  
However, all these
results require polynomial (but sublinear) query time.
There are also results on special cases of planar graphs and
special kinds of
queries~\cite{\full{conf/stacs/DjidjevPZ95,}{}journals/algorithmica/DjidjevPZ00\full{,conf/icalp/DjidjevPZ91}{},stoc/ChenX00,journals/talg/KowalikK06}.

\begin{table*}[h!]
\begin{center}
\begin{tabular}{\full{|l}{}  | l | l | l | l|}
\hline
\full{&}{} Preprocessing & Space & Query & Reference \\
\hline
\hline
\full{Directed &}{}\footnotesize \full{$O(n(\lg(nN))(\lg n)^3\epsilon^{-2})$}{$O(n\lg^4n\epsilon^{-2})$}  & \footnotesize \full{$O\left(n\cdot\epsilon^{-1}(\lg n)(\lg(nN))\right)$}{$O(n\epsilon^{-1}\lg^2n)$}           & \footnotesize\full{$O(\lg\lg(nN)+\epsilon^{-1})$}{$O(\lg\lg(n)+\epsilon^{-1})$} & \cite[Thm.~3.16]{ThorupJACM04}\\
\full{Directed &}{}\footnotesize \full{$O(n(\lg n)^2(\lg(nN))\epsilon^{-1})$}{$O(n\lg^3 n\epsilon^{-1})$}  &\footnotesize \full{$O\left(n\cdot\epsilon^{-1}(\lg n)(\lg(nN))\right)$}{$O(n\epsilon^{-1}\lg^2 n)$}           & \footnotesize\full{$O((\lg n)(\lg\lg(nN)+\epsilon^{-1}))$}{$O(\lg n(\lg\lg n+\epsilon^{-1}))$} & \cite[Prop.~3.14]{ThorupJACM04}\\
\full{Directed &}{}\footnotesize \full{$O(n(\lg n+\epsilon^{-1})(\lg n)(\lg(nN)))$}{$O(n\lg^2n(\lg n+\epsilon^{-1}))$}  &\footnotesize\full{$O\left(n\cdot\epsilon^{-1}(\lg n)(\lg(nN))\right)$}{$O(n\epsilon^{-1}\lg^2 n)$}           & \footnotesize\full{$O((\lg n)(\lg\lg(nN)+\epsilon^{-1}))$}{$O(\lg n(\lg\lg n+\epsilon^{-1}))$} & \cite[Sec.~7]{conf/soda/Klein05}\\
\hline
\full{Undirected &}{}\footnotesize $O(n(\lg n)^3\epsilon^{-2})$  &\footnotesize \full{$O\left(n\cdot\epsilon^{-1}\lg n\right)$}{$O(n\epsilon^{-1}\lg n)$}           &\footnotesize $O(\epsilon^{-1})$ & \cite[Thm.~3.19]{ThorupJACM04}\\
\full{Undirected &}{}\footnotesize $O(n(\lg n)^2\epsilon^{-1})$  & \footnotesize\full{$O\left(n\cdot\epsilon^{-1}\lg n\right)$}{$O(n\epsilon^{-1}\lg n)$}           & \footnotesize$O(\epsilon^{-1}\lg n)$ & \cite[Implicit]{ThorupJACM04}\\
\hline
\end{tabular}
\end{center}
\caption{Time and space complexities of $(1+\epsilon)$--approximate distance oracles for planar graphs on $n$ nodes. \full{$N$~denotes the largest integer weight.}{In this table, the largest integer weight is assumed to be polynomial in~$n$. The upper part lists results for directed, the lower part lists results for undirected planar graphs.} \label{tab:planardoresults}}
\end{table*}

\full{\paragraph{Excluded-minor graphs}}{}

Abraham and Gavoille~\cite{AbrahamG06} extend Thorup's result to minor-free graphs. After a polynomial-time preprocessing step, point-to-point queries can be answered in time 
$O(\epsilon^{-1}\lg n)$ using a data structure of size $O( n\epsilon^{-1}\lg n)$. 

\full{
\paragraph{Bounded-\treew\ graphs}

For digraphs with \treew~$w$, Chaudhuri and Zaroliagis~\cite{journals/algorithmica/ChaudhuriZ00} give a $O(w^{3}n)$--time algorithm
to compute a distance oracle with query time $O(w^{3}\invackermann(n))$, where $\invackermann(n)$ denotes the inverse Ackermann function. 
Gavoille et al.~\cite[Theorem~2.4]{journals/jal/GavoillePPR04}
provide a distance oracle with space $O(n\cdot w \lg^2n)$ and query time $O(\lg n)$.
}{}

\full{
\paragraph{Graphs of bounded doubling dimension}

Let $\Delta$ denote the aspect ratio (diameter divided by minimum distance) and let $\alpha=\lg_2\lambda$ 
denote the doubling dimension. 
Har-Peled and Mendel~\cite{MendelHarPeled}, improving upon earlier results by 
Talwar~\cite{conf/stoc/Talwar04} and Slivkins~\cite{SlivkinsPODC05JournalVersion}, 
provide a $(1+\epsilon)$--approximate distance oracle (they term it {\em compact 
representation scheme}) using space $(1/\epsilon)^{O(\alpha)}n$ with query time 
$O(\alpha)$. 
Bartal et al.~\cite{journals/corr/abs-1008-1480}, extending~\cite{MendelHarPeled},
recently gave a distance oracle with constant query time, at the cost of increasing 
the space consumption. We shift the trade-off in the reverse direction, increasing the 
query time while reducing the space requirement to linear (see Section~\ref{sec:doubling} 
in the appendix for details). 
}{}

\def\mainthm{For any undirected planar graph $G$ with non-negative edge weights there exists a $(1+\epsilon)$--approximate distance oracle with query time
$O(\epsilon^{-2}\lg^2n)$, linear space, and preprocessing time $O(n\lg^2n)$.}

\def\genusthm{For any undirected graph $G$ embedded in a surface of Euler genus~$g$, there exists a $(1+\epsilon)$--approximate distance oracle with query time
$O(\epsilon^{-2}(\lg n + g)^2)$, linear space, and preprocessing time $\genusprepro$. The oracle can also be constructed in time $\genusaltprepro$.}

\def\genusthmfastquery{For any undirected graph $G$ embedded in a surface of Euler genus~$g$, there exists a $(1+\epsilon)$--approximate distance oracle with query time
$O(g/\epsilon)$, space $O(n(g+\lg n)/\epsilon)$, and preprocessing time $\genusfastqprepro$. The oracle can be distributed as a labeling scheme using $O((g+\lg n)/\epsilon)$~bits per node.}

\def\minorthm{For any minor $H$ there is an integer $h=h(H)$ such that 
for any undirected $H$--\minorfr\ graph $G$ with $n$ nodes and $m$ edges  there exists a
$(1+\epsilon)$--approximate distance oracle with query time
$O(h\epsilon^{-2}\lg^2n)$, space $O(m)$, and polynomial preprocessing time.}

\def\treewthm{For any graph $G$ on $n$ nodes and $m$ edges with \treew~$w=w(n)$
there is an exact distance oracle using space $O(m)$ with query time
$O(w^6\lg^4n)$. If $G$ is \minorfr\ or if $w = O(\lg n)$, 
the distance oracle can be constructed in polynomial
time.}

\def\doublingthm{For any unit-length graph $G=(V,E)$ on $n=\abs{V}$ nodes and
$m=\abs{E}$ edges with doubling dimension $\alpha=\lg_2\lambda$ and aspect
ratio $\Delta$ and for any $\epsilon > 0$, there exists a $(1 +
\epsilon)$--approximate distance oracle using space $O(m)$ and query
time $\lambda^{O(\lg ((\frac c\epsilon)^\alpha\frac
\alpha\epsilon\lg\frac\Delta\epsilon))}$ for some constant $c$.}

\def\doublingthm{For any unit-length graph $G=(V,E)$ on $n=\abs{V}$ nodes and
$m=\abs{E}$ edges with doubling dimension $\alpha=\lg_2\lambda$ and aspect
ratio $\Delta$ and for any $\epsilon > 0$, 
if there is a $(1+\epsilon)$--approximate distance labeling scheme with label 
length $\ell=\ell(n,\lambda,\Delta,\epsilon)$ and query time $q=\ell(n,\lambda,\Delta,\epsilon)$ 
then there exists a $(1 + \epsilon)$--approximate distance oracle using space $O(m)$ and query
time $\lambda^{O(\lg (\ell/\epsilon))}+q$. }

\def\doublingcor{
For any unit-length graph $G=(V,E)$ on $n=\abs{V}$ nodes and
$m=\abs{E}$ edges with doubling dimension~$\alpha$ and diameter $\Delta$, 
 and for any $\epsilon > 0$ 
there exists a $(1 + \epsilon)$--approximate distance oracle using space $O(m)$ and query
time $(\lg\Delta)^{O(\alpha)}\cdot(1/\epsilon)^{O(\alpha^2)}$. }

\full{\section{Linear-space approximate distance oracle for planar graphs}}{\section{Tunable approximate distance oracle for planar graphs}}
\label{sec:planar}
We prove our main theorem. The description of the improved preprocessing algorithm 
can be found in its own section (Section~\ref{sec:planar:prepro}). 
\begin{theorem}
 \mainthm
 \label{thm:main}
 \end{theorem}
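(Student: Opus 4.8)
The plan is to run the recursive shortest-path-separator oracle of Thorup~\cite{ThorupJACM04}, but to retain its connection tables only for a sparse ``skeleton'' of vertices and to reconstruct everything else at query time with a single linear-time planar shortest-path computation. First I would fix a shortest-path tree of $G$ and build the usual decomposition: repeatedly split the current piece by deleting $O(1)$ root-to-node paths of a shortest-path tree of that piece, obtaining a decomposition tree of depth $O(\lg n)$ in which each split introduces $O(1)$ separator paths, each a shortest path of its piece; thus each vertex lies in $O(\lg n)$ pieces and is \emph{associated} with $O(\lg n)$ separator paths. I would then refine (equivalently, re-cut) this recursion into a Frederickson-style $r$-division with $r=\Theta(\epsilon^{-2}\lg^2 n)$: there are $O(n/r)$ leaf pieces, each of size at most $r$, with total boundary $\sum_P|\partial P|=O(n/\sqrt r)$; call a vertex \emph{important} if it lies on $\partial P$ for some leaf piece $P$. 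As in~\cite{ThorupJACM04}, for every separator path $Q$ and every vertex $w$ of the piece that $Q$ splits one computes an $\epsilon$-covering \emph{connection set}: $O(\epsilon^{-1})$ portals on $Q$ (at the scale relevant to $w$) together with the exact distances $d_G(w,\cdot)$ to them. The oracle then consists of $G$ itself, the decomposition tree equipped with a lowest-common-ancestor structure, the leaf pieces, and --- this is the point --- the connection tables of \emph{only} the important vertices. Since there are $O(n/\sqrt r)$ important vertices, each keeping $O(\epsilon^{-1}\lg n)$ numbers, and $\sqrt r=\Theta(\epsilon^{-1}\lg n)$, the retained data is $O(n)$ words; replacing $r$ by $r/\delta^2$ shrinks this to $\delta n$, giving the promised $(1+\delta)$-overhead variant. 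The tables are produced together with the decomposition by the improved $O(n\lg^2 n)$-time preprocessing described later in the paper (one planar single-source computation per separator path inside its piece, over $O(\lg n)$ scales), after which all non-skeleton tables are discarded.

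I would answer a query $(u,v)$ as follows. Let $P_u,P_v$ be their leaf pieces. Run a linear-time planar single-source shortest-path computation from $u$ inside $P_u$ and from $v$ inside $P_v$; in $O(r)=O(\epsilon^{-2}\lg^2 n)$ time this yields $d_{P_u}(u,a)$ for all $a\in\partial P_u$ and $d_{P_v}(v,b)$ for all $b\in\partial P_v$. Let $\mathcal Q$ be the $O(\lg n)$ separator paths associated with the piece $\lca(P_u,P_v)$ and with its ancestors; these are exactly the paths for which both $\partial P_u$ and $\partial P_v$ carry connection tables. For each $Q\in\mathcal Q$ and each portal $q$ of $Q$, put $\tilde d(u,q)=\min_{a\in\partial P_u}\bigl(d_{P_u}(u,a)+d_a(q)\bigr)$, where $d_a(q)$ is the distance from $a$ to $q$ recorded in $a$'s connection table for $Q$, and symmetrically define $\tilde d(q,v)$ using $\partial P_v$; return the minimum of $d_{P_u}(u,v)$ (finite only when $P_u=P_v$) and of $\tilde d(u,q)+\tilde d(q,v)$ over all these portals. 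The combining step costs $O(\lg n)$ paths times $O(\epsilon^{-1})$ portals times $O(\sqrt r)=O(\epsilon^{-1}\lg n)$ boundary vertices, i.e.\ $O(\epsilon^{-2}\lg^2 n)$, so the query runs in $O(\epsilon^{-2}\lg^2 n)$. Every returned quantity is a sum of genuine distances, so the answer is at least $d_G(u,v)$; conversely a shortest $u$--$v$ path either stays inside $\lca(P_u,P_v)$, where it must cross one of that piece's separator paths, or leaves it through an ancestor separator path --- in both cases it crosses some $Q\in\mathcal Q$, and choosing $a$ to be the first vertex of $\partial P_u$ on the relevant sub-path makes $d_{P_u}(u,a)=d_G(u,a)$, so the standard $\epsilon$-covering estimate on $Q$ (with the symmetric statement at $v$) is at most $(1+O(\epsilon))\,d_G(u,v)$; rescaling $\epsilon$ by a constant removes the hidden factor.

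The main obstacle is the interface between the two decompositions: one needs a single recursion that is at once (a) a shortest-path-separator recursion, so that the connection sets of~\cite{ThorupJACM04} and their $(1+\epsilon)$-covering guarantee apply verbatim at the important vertices, and (b) equipped with a cut-off level whose pieces have size $\Theta(r)$ and total boundary only $O(n/\sqrt r)$, so that the skeleton is small enough to fit in $O(n)$ space while each query's internal shortest-path computation costs only $O(r)$. Making these compatible --- in particular guaranteeing that every separator path needed in a query is actually recorded at the relevant important vertex, and that the $O(n\lg^2 n)$ preprocessing can build and then drop the non-skeleton tables (ideally without ever materializing all of them at once) --- is the technical heart; once the structure is fixed, the portal-scaling bookkeeping and the error analysis are routine.
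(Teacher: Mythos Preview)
Your plan is the paper's: compute a Frederickson $r$--division with $r=\Theta(\epsilon^{-2}\lg^2 n)$, store Thorup's $O(\epsilon^{-1}\lg n)$ connections only at the $O(n/\sqrt r)$ boundary nodes, and at query time run a linear-time SSSP inside $R_s$ and $R_t$ before combining boundary connections. Two points in your write-up should be corrected, however.

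First, the ``interface'' you flag as the main obstacle is a non-issue: the paper does \emph{not} refine one decomposition into the other. The $r$--division and the shortest-path-separator recursion are computed independently; boundary nodes of the former simply receive connection sets in the latter, just as any other node would. Stopping the Thorup recursion at size-$r$ pieces would not give $O(\sqrt r)$ boundary per piece (the accumulated separator paths can be long), so your ``refine/re-cut'' description does not work as stated; fortunately it is unnecessary.

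Second, your query estimate $\min_q\bigl(\tilde d(u,q)+\tilde d(q,v)\bigr)$ is not correct, because Thorup's portals are node-specific: a portal $q$ in some $a$'s connection set for $Q$ need not appear in any $b\in\partial P_v$'s set, so the two minima can fail to share a common $q$ even when the true shortest path crosses $Q$. The paper instead sorts all connections $(p,a)$ and $(p,b)$ by the position of $p$ along $Q$ and performs a single linear sweep that maintains running minima of $d(s,a)+d(a,p)+d_Q(p,\hat p)$ and the symmetric quantity for $t$; this correctly incorporates the along-$Q$ distance $d_Q(p_1,p_2)$ between distinct portals and runs in time linear in the total number of connections, still $O(\epsilon^{-2}\lg^2 n)$. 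With these two fixes your argument matches the paper's proof.
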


\subsection{Review of Thorup's distance oracle}
\label{sec:planar:overview}

We briefly review a variant of Thorup's distance oracle for
undirected graphs (using somewhat different terminology).\footnote{This
variant does not appear in~\cite{ThorupJACM04} but is an obvious
simplification, analogous to that of~\cite[Proposition~3.14]{ThorupJACM04} (which
applies to directed graphs) resulting in slower query times.  Since
our query time is slower anyway, this simplified variant suffices.}

There are two core ideas.  The first is {\em approximately
  representing shortest paths that intersect a shortest path}.  Let
$P$ be a shortest path in a graph $G$.  A pair $(p,v)$ of nodes where
$p$ is in $P$ and $v$ is in $G$ is a {\em connection for $v$ with
  respect to $P$}.  A set ${\cal C}$ of such connections {\em covers
  $v$ in $G$ with respect to $P$} if, for every node $p$ of $P$,
there is a connection $(p',v)$ in ${\cal C}$ such that
\begin{equation} \label{eq:cover}
\dist(p,p')+\dist(p',v) \leq (1+\epsilon)\, \dist(p,v)
\end{equation}
Let $u,v$ be nodes of the input graph.  Let $Q$ be the shortest
$u$-to-$v$ path that intersects $P$.  Suppose ${\cal C}$ is a set of
connections that covers $u$ and $v$ with respect to $P$.  Then
it contains connections $(p,u), (p',v)$ such that
\begin{equation} \label{eq:connections-and-shortest-path}
\dist(u,p)+\dist(p,p')+\dist(p',v) \leq (1+\epsilon) \length(Q)
\end{equation}
Thorup gives an algorithm that, given a (mostly) planar graph $G$ and
a shortest path $P$, computes a set ${\cal C}$ of connections
that covers all nodes of $G$ and that has $O(\epsilon^{-1})$
connections per node $v$.  In Section~\ref{sec:planar:prepro}, we give
an algorithm that achieves a faster\footnote{Our algorithm depends on
  $G$ being wholly planar (as opposed to mostly planar as in~\cite{ThorupJACM04}, 
  which is the case for the variant we
  address.} running time by covering only a subset of the nodes of
$G$.  The distance oracle involves storing with each node $v$ the
connections that cover $v$ with respect to several shortest paths (and
the distances associated with these connections).  The storage required for $v$
thus has size $O(\epsilon^{-1})$ times the number of such paths.

The second idea is {\em recursively decomposing a planar graph with
  shortest-path separators}.\full{\footnote{A similar but more involved such
  decomposition arose in~\cite{AGKKW98}.}}{} This idea is based on a lemma
in~\cite{LT79} stating that, for any 
spanning tree $T$ in a planar graph in which every face is a triangle,
there is a nontree edge $e$ such that the unique simple cycle in $T
\cup \set{e}$ is a balanced separator.  The nodes of this separator
comprise two paths in $T$.

The distance-oracle construction uses this lemma with $T$ being a
shortest-path tree to recursively decompose the input graph.  The
recursive decomposition defines a binary {\em decomposition} tree in
which each node $x$ is labeled by (i) a subgraph $G(x)$ of the input
graph and (ii) the separator $S(x)$ used to decompose $G(x)$, if $x$
is not a leaf.  If $x$ is the root, $G(x)$ is the input graph.  If
$x$ has children $y$ and $z$, removing the separator $S(x)$ from
$G(x)$ results in two separated subgraphs, $G(y)$ and $G(z)$.  If $x$
is a leaf, $G(x)$ consists of one node.

Each input-graph node $v$ is associated with some decomposition-tree
node, namely, the leafmost node $x$ whose subgraph includes $v$.  We
say that the ancestors of $x$ are {\em relevant} to $v$.  Thus each
input-graph node $v$ has $O(\lg n)$ relevant tree-nodes.  
The distance oracle assigns a label to $v$ that consists of a set of
connections; for each tree-node $x$ relevant to $v$, for each of the
two paths $P$ comprising the separator $S(x)$, the distance oracle
stores a set of connections that cover $v$ in $G(x)$ with respect to $P$.
It follows that the label of $v$ has size $O(\epsilon^{-1} \lg n)$.

Next we show that these labels suffice to estimate point-to-point
distances.  We say that a tree-node $x$ is {\em relevant} to a path
$Q$ if $S(x)$ contains a node of $Q$, and is the {\em most relevant}
if $x$ is the rootmost relevant tree node.
\full{\begin{lemma}}{} If $x$ is the tree-node most relevant to $Q$ then $G(x)$
  contains $Q$.
\full{\end{lemma}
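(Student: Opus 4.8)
The plan is to descend the decomposition tree from the root down to $x$ one edge at a time, maintaining the invariant that $Q$ is contained in the subgraph attached to the current tree-node, and to use the definition of ``most relevant'' to decide, at each step, which of the two children to move to.

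First I would isolate the two structural facts about the recursive decomposition that the argument rests on. (i)~If a tree-node $w$ is a descendant of a tree-node $w'$, then $G(w)$ is a subgraph of $G(w')$; this follows by a trivial induction on the tree-distance from $w'$ to $w$, since each child's subgraph is a subgraph of its parent's, and in particular every vertex of the separator $S(w)$ is then a vertex of $G(w')$. (ii)~If $a$ is an internal tree-node with children $b$ and $c$, then $V(G(b))$ and $V(G(c))$ are disjoint, no edge of $G(a)$ joins $V(G(b))$ to $V(G(c))$, and every vertex of $G(a)$ is a vertex of $S(a)$, of $G(b)$, or of $G(c)$; this is exactly what it means for removing $S(a)$ from $G(a)$ to leave the two separated subgraphs $G(b)$ and $G(c)$.

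Then I would run the induction. Let $r = x_0, x_1, \dots, x_k = x$ be the root-to-$x$ path in the decomposition tree. Because $x$ is the rootmost relevant tree-node, none of its proper ancestors $x_0, \dots, x_{k-1}$ is relevant, so $V(Q) \cap V(S(x_i)) = \emptyset$ for every $i < k$. I would show by induction on $i$ that $Q \subseteq G(x_i)$. The base case $i = 0$ holds because $G(x_0)$ is the whole input graph, which contains $Q$. For the step, assume $Q \subseteq G(x_i)$ with $i < k$ and let $w$ be the sibling of $x_{i+1}$. Since $Q$ avoids $S(x_i)$, fact~(ii) puts every vertex of $Q$ into $V(G(x_{i+1}))$ or $V(G(w))$; as $Q$ is connected and there is no edge of $G(x_i)$ between these two vertex sets, $Q$ lies entirely in one of $G(x_{i+1})$, $G(w)$. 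To pin down which: $x$ is relevant, so $S(x)$ contains some vertex $q \in V(Q)$; $x$ is a descendant of $x_{i+1}$, so by fact~(i) $q \in V(G(x_{i+1}))$, while $q \notin V(S(x_i))$ since $q \in V(Q)$. Thus $Q$ has a vertex in $V(G(x_{i+1}))$, hence $Q \subseteq G(x_{i+1})$. Setting $i = k$ yields $Q \subseteq G(x)$.

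I expect the only genuinely delicate point to be this last ``which child'' step, and it is resolved exactly by the observation above: the separator $S(x)$ that $Q$ is known to meet is nested, via fact~(i), inside $G(x_{i+1})$ rather than inside the sibling. A minor side remark I would include is that ``the most relevant tree-node'' is well defined at all: if two incomparable tree-nodes $a, b$ were both relevant, then $S(a) \subseteq G(a)$ and $S(b) \subseteq G(b)$ would lie on opposite sides of the separator $S(\lca(a,b))$ by facts~(i) and~(ii) applied at $\lca(a,b)$, so the path $Q$ joining a vertex of $S(a)$ to a vertex of $S(b)$ must also meet $S(\lca(a,b))$; iterating this shows the relevant tree-nodes have a unique rootmost element, which is the $x$ of the statement.
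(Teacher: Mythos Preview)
The paper states this lemma without proof, so there is nothing to compare against; your inductive descent along the root-to-$x$ path is exactly the natural argument and is correct. The key step---using that $S(x)$ meets $Q$ and that $S(x)\subseteq G(x)\subseteq G(x_{i+1})$ to determine which child of $x_i$ contains $Q$---is the right idea and is carried out cleanly.

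One small wrinkle in your side remark on well-definedness: the claim that $Q$ must meet $S(\lca(a,b))$ tacitly assumes $Q\subseteq G(\lca(a,b))$, which is not yet known at that point (fact~(ii) only controls edges of $G(\lca(a,b))$, not of the whole input graph). A cleaner route is to derive uniqueness \emph{from} your main induction rather than before it: pick any relevant tree-node $x$ of minimum depth; since no proper ancestor of $x$ is relevant, your induction applies verbatim and gives $Q\subseteq G(x)$. If there were a second minimum-depth relevant node $x'$, the same argument would give $Q\subseteq G(x')$, but $G(x)$ and $G(x')$ are vertex-disjoint (they lie in different children of $\lca(x,x')$ by facts~(i) and~(ii)), a contradiction. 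This closes the gap without changing the substance of your proof.
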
}{}
Let $u,v$ be any pair of input-graph nodes, and let $Q$ be the
shortest $u$-to-$v$ path.  Let $x$ be the tree-node most relevant to
$Q$.  Then $G(x)$ contains $Q$, and at least one of the paths
comprising the separator $S(x)$, say $P$, intersects $Q$.  It follows
from~(\ref{eq:connections-and-shortest-path}) that the
$u$-to-$v$ distance is approximately
\begin{equation} \label{eq:dist-est}
\dist(u,p)+\dist(p,p')+\dist(p',v)
\end{equation}
for two nodes $p,p'$ on $P$. 
To estimate the $u$-to-$v$ distance, therefore, the following
procedure suffices: for every tree-node $x$ that is relevant to $u$
and $v$, compute the minimum of~(\ref{eq:dist-est}) over connections
$(p,u)$ and connections $(p',v)$ where $p$ and $p'$ belong to one of the two
paths comprising $S(x)$.  This takes time proportional to the number of such
  connections.  \full{We review this process in
Section~\ref{sec:planar:query} since in our case the situation is
slightly more complicated.}{}

\subsection{Our compact distance oracle}

Our linear-space construction draws on another kind of recursive
decomposition using separators.  
\full{
\begin{definition}[{Frederickson~\cite{journals/siamcomp/Frederickson87}}]
  A {\em division} of a graph $G$ is a partition of the edges of $G$
  into edge-induced subgraphs. A node of $G$ is a {\em boundary node}
  of the partition if it belongs to more than one subgraph. An
  $r$--division of an $n$-node planar graph $G$ is a division of $G$
  into $O(n/r)$ subgraphs, called {\em regions}, with the following
  properties:
(i) Each region contains $O(r)$ edges, and (ii)
the number of boundary nodes in each region is at most $O(\sqrt{r})$.
\label{def:kleinsubramanian}
\end{definition}
\noindent Note that there are $O(n/\sqrt{r})$ boundary nodes in
total.

\begin{lemma}[{Frederickson~\cite{journals/siamcomp/Frederickson87}}]
A planar graph on $n$ vertices can be divided into an
$r$--division in $O(n\lg n)$ time.
\end{lemma}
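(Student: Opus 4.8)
The plan is to prove the lemma by recursively applying the Lipton--Tarjan planar separator theorem~\cite{LT79} in two phases: first to cut $G$ into $O(n/r)$ pieces each with $O(r)$ vertices, and then to further subdivide any piece carrying too many boundary vertices. Throughout I will use the weighted form of the separator theorem---for any $n$-vertex planar graph with a non-negative vertex weight function of total weight $W$, one can compute in $O(n)$ time a set $S$ of $O(\sqrt n)$ vertices whose deletion leaves every component with weight at most $\tfrac23 W$---and I will apply it to connected components separately, so that connectivity may be assumed. Each ``piece'' produced will be a vertex subset together with the separator vertices split off adjacent to it; the boundary vertices of the final division will be exactly the separator vertices shared among two or more pieces, and each edge will be assigned to a piece containing both its endpoints, yielding a division in the sense of Definition~\ref{def:kleinsubramanian}.

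First I would run \emph{Phase~1}: starting from $G$ with the uniform weight function, repeatedly take any current piece with more than $r$ vertices and split it with a $\tfrac23$-balanced separator, placing the separator vertices into both resulting sub-pieces. After $O(\lg(n/r))$ levels every piece has at most $r$ vertices. I would then argue, by the standard recursion analysis, that the number of pieces is $O(n/r)$ (a binary recursion that stops at size $r$ and whose size blow-up from duplicated separator vertices is only $O(\sqrt{\cdot})$ of the subtree size has $O(n/r)$ leaves), that the total number of boundary vertices introduced is $O(n/\sqrt r)$ (summing $O(\sqrt{s_i})$ over the pieces, dominated by the coarsest level), and that each level costs $O(n)$ time since the linear-time separator routine is run on pieces whose sizes sum to $O(n)$; hence Phase~1 takes $O(n\lg n)$ time.

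Then I would run \emph{Phase~2} to fix pieces with too many boundary vertices: a Phase-1 piece $R$ has at most $r$ vertices but possibly $\Theta(r)$ boundary vertices, whereas $O(\sqrt r)$ is needed. For such an $R$ with $b$ boundary vertices I would put weight $1$ on its boundary vertices and $0$ elsewhere and recursively split with the balanced separator theorem---each separator has size $O(\sqrt{|R|})=O(\sqrt r)$ and each split reduces the boundary count by a $\tfrac23$ factor---halting on a sub-piece once it has $O(\sqrt r)$ boundary vertices. This uses $O(\lg r)$ levels, costing $O(r\lg r)$ time on $R$ and $O(n\lg n)$ in total over the $O(n/r)$ pieces. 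The key accounting step, and the part I expect to be the main obstacle, is to show that all the copied separator vertices do not inflate the piece count or the total boundary: the number of splits inside $R$ is $O(b/\sqrt r)$, so across all pieces it is $O\big(\sum_R b_R/\sqrt r\big) = O\big((n/\sqrt r)/\sqrt r\big) = O(n/r)$, keeping the number of regions $O(n/r)$ and adding only $O(n/\sqrt r)$ new boundary vertices; this charging works because each recursing sub-piece still contains $\Omega(\sqrt r)$ boundary vertices against which the $O(\sqrt r)$ extra boundary of its split can be charged (and in Phase~1 the analogous inflation is controlled because piece sizes decay geometrically). Combining the two phases gives the claimed $O(n\lg n)$-time construction of an $r$-division.
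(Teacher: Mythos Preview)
The paper does not give its own proof of this lemma; it simply states the result and attributes it to Frederickson~\cite{journals/siamcomp/Frederickson87}. Your proposal is essentially a faithful sketch of Frederickson's original two-phase argument (Phase~1: recursive $\tfrac23$--balanced planar separators on vertex count until every piece has $O(r)$ vertices; Phase~2: further splitting, with weight on boundary vertices, of any piece whose boundary exceeds $c\sqrt r$), together with the standard charging that keeps the region count at $O(n/r)$ and the total boundary at $O(n/\sqrt r)$. The time analysis you give---$O(n)$ per recursion level and $O(\lg n)$ levels in each phase---matches Frederickson's. So your approach is correct and is exactly the proof the paper is citing; there is nothing to compare against in the paper itself.

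One small point worth tightening when you write it out in full: the bound ``the number of splits inside $R$ is $O(b/\sqrt r)$'' in Phase~2 relies on choosing the stopping threshold $c\sqrt r$ with $c$ strictly larger than the constant in the $O(\sqrt r)$ separator size, so that each leaf sub-piece can absorb the boundary added by its parent split. You allude to this with the charging remark, but making the constants explicit is what actually closes the recursion $k \le (b + c_1\sqrt r\, k)/(c_2\sqrt r)$.
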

}{
Frederickson~\cite{journals/siamcomp/Frederickson87} introduced the 
notion of an $r$--division, which is a partition of the edges
  into edge-induced subgraphs (called {\em regions}) such that 
  each region contains $O(r)$ edges and 
the number of boundary nodes in each region is at most $O(\sqrt{r})$.
An $r$--division can be computed in time $O(n\lg n)$. 
}

\full{\subsubsection{Using an $r$--division to obtain linear space}}{}

Before carrying out the recursive decomposition with shortest-path
separators, our preprocessing algorithm computes an $r$--division for $r=\ell^2$
(where $\ell$ is a parameter).  Subsequently, connections $(v,w)$ are
only stored for those nodes $v$ that are boundary nodes of the
$r$--division.  Since there are $O(n/\sqrt{r})$ boundary nodes, the
connections and associated distances require storage
$O((n\epsilon^{-1} \lg n)/\sqrt{r})$.  We choose
$\ell=\Theta(\epsilon^{-1} \lg n)$ so the total storage is $O(n)$.

An {\em $s$-to-$t$ query} is handled as follows.  First, compute
shortest-path distances from $s$ to all the nodes in $s$'s region
$R_s$.  This takes $O(\ell^2)$
time~\cite{journals/jcss/HenzingerKRS97}.  At this point, the query 
algorithm has distances in the subgraph $R_s$ from $s$ to all the
boundary nodes of $R_s$ (and to $t$, if $t$ is in $R_s$).  There are
$O(\ell)$ such boundary nodes.  Similarly, compute shortest-path
distances to $t$ from all the nodes in $t$'s region $R_t$, obtaining
distances in the subgraph $R_t$ to $t$ from all the boundary nodes of
$R_t$.

Let $A, B$ be, respectively, the set of connections for boundary nodes
of $R_s, R_t$.  For each separator path $P$ that has connections in
$A$ and $B$, the procedure described in Section~\ref{sec:planar:query}
finds the
shortest $s$-to-$t$ path that enters $P$ via a connection of $A$ and
leaves $P$ via a connection of $B$.  The time is linear in
the number of such connections (see also~\cite[Section~3.2.2]{ThorupJACM04} 
and \cite[Lemma~3.6]{ThorupJACM04}).  Since each of the $O(\ell)$ boundary
nodes of $R_s$ and $R_t$ has $O(\epsilon^{-1} \lg n)$ connections, the total
time for these computations is $O(\ell \epsilon^{-1} \lg n)$.

Finally, return the minimum overall path-length (including the
$s$-to-$t$ distance within $R_s$, if $t$ belongs to $R_s$).  
The total time for handling the query is $O(\ell^2 + \ell 
\epsilon^{-1}\lg n)$.

\full{\subsubsection{Details of the query algorithm}}{}
\label{sec:planar:query}
We now explain how we find the shortest $s$-to-$t$ path
that enters $P$ via a connection of $A$ and leaves $P$ via a
connection of $B$.  The method is a generalization of that
in~\cite[Sections~3.2.1 and 3.2.2]{ThorupJACM04}.

For each connection $(b, p)$ in $A$, $b$ is a boundary node of $R_s$
and we have $\dist(s,b)$.  For each connection $(b,p)$ in $B$, $b$ is
a boundary node of $R_t$ and we have $\dist(t,b)$.
\full{ }{}
Let $C$ be the sequence of all connections $(s,p)$ and $(t,p)$ in
$A\cup P$, sorted according to the position of $p$ on $P$.
We use the following procedure.
\begin{tabbing}
initialize $m_s, m_t , d:= \infty$\\
initialize $\hat p : = p_0$\\
for each connection $(p, b)$ in $C$ in order,\\
\qquad \= $m_s := m_s + \dist(\hat p, p)$\\
             \> $m_t := m_t + \dist(\hat p, p)$\\
             \> $\hat p := p$\\
             \> if $b$ is a boundary node of $R_s$,\\
             \> \qquad \= $m_s := \min \{m_s, \dist(s,b)+\dist(b,p)\}$\\
            \> if $b$ is a boundary node of $R_t$\\
            \>               \> $m_t := \min \{m_t, \dist(t,b)+\dist(b,p)\}$\\
            \>$d := \min\{d, m_s+m_t\}$\\
return $d$
\end{tabbing}
The procedure requires time $O(\text{number of connections considered})$.
The procedure maintains the invariant that, after a node $\hat p$ of $P$
has been considered in the loop, 
$m_s$ is the length of the shortest $s$-to-$\hat p$ path of the form
that goes via a boundary node $b$ of $R_s$ and a connection $(b,p)$
and then travels along $P$ from $p$ to $\hat p$, 
where $p$ appears before $\hat p$ on $P$.  A similar statement holds for $m_t$.  It
follows that the value $d$ returned by the procedure is the length of
the shortest $s$-to-$t$ path that travels in $R_s$ to a boundary node
$b$ of $R_s$, then goes to $P$ via a
connection for $b$, then travels along $P$ then leaves $P$ via a
connection for a boundary node $b'$ of $R_t$ then travels to $t$
within $R_t$.

\full{
\subsection{Extensions: distance oracles for planar digraphs and reachability oracles}
\label{sec:planar:extensions}
Similar techniques (i.e. storing the connections for a subset of the nodes only) apply to more of  
Thorup's results based on shortest path separators~\cite{ThorupJACM04}. 
Instead of using a more sophisticated preprocessing algorithm that computes 
only the connections for the boundary nodes, we may compute the connections for all the nodes 
(using Thorup's preprocessing algorithms as black boxes) and then store only those 
for boundary nodes. By doing so, we obtain the following linear-space oracles. 
For directed planar graphs, there is a $(1+\epsilon)$--approximate distance oracle with  query time 
$O((\epsilon^{-1}(\lg n)(\lg(nN)))^{2})$, where $N$ denotes the largest 
integer weight. Furthermore, there is a reachability oracle with query
time $O(\lg^2 n)$ 
(using~\cite[Theorem~2.7]{ThorupJACM04}). 

We exploit a similar black-box construction for \minorfr\ graphs in Section~\ref{sec:minorfr}. 
}{}

\full{\section{Preprocessing algorithm for linear-space approximate distance
  oracle for planar graphs}}{\section{Improved preprocessing algorithm}}
\label{sec:planar:prepro}

Thorup's preprocessing algorithm for his undirected construction takes
time $O(n \epsilon^{-2}\lg^3 n)$ (as stated in \cite[Theorem~3.19]{ThorupJACM04}).  We
give a preprocessing scheme for our construction that takes time $O(n
\lg^2 n)$, independent of $\epsilon$.  We give details later, but here 
we observe that the factor $O(\epsilon^{-2} \lg n)$ speedup has three 
sources. \full{(This explanation is aimed at readers familiar with Thorup's paper.)}{}

First, since we are not aiming for a query time of $O(\epsilon^{-1})$,
we can use a simpler preprocessing approach than the one
underlying~\cite[Theorem~3.19]{ThorupJACM04}; we use the approach
that for directed graphs
underlies~\cite[Proposition~3.14]{ThorupJACM04}.  The corresponding
bound for undirected graphs is listed in
Table~\ref{tab:planardoresults} as ``implicit.''

Second, we only need to compute
connections for a small subset of the nodes (the boundary nodes of the
$r$--division).  That in itself does not seem to permit an additional
speedup using Thorup's method since his algorithm depends not on the
number of connections stored but on the sizes of the graphs searched.
Therefore, third, in addition we use another approach to finding connections,
one based on the multiple-source shortest-path (MSSP) algorithm of
Klein~\cite{conf/soda/Klein05} or that of Cabello and 
Chambers~\cite{conf/soda/CabelloC07}.

\begin{tabbing}
{\sc Preprocess}$(G_0)$\\
\qquad \= let $B_0$ be the set of boundary nodes of an
$r$--division~\cite{journals/siamcomp/Frederickson87}\\
\> let $T$ be a shortest-path tree\\
\> compute recursive decomposition based on cycle separators \full{of the form}{} $T\cup \set{e}$ \\
\> for each nonroot node $x$ of recursive-decomposition tree,\\
\> \qquad \= for each path $P_i$ ($i=1,2$) comprising $S(x)$,\\
\>\> \qquad compute connections for nodes of $B_0$ in $G(x)$ with respect to
$P_i$
\end{tabbing}

The last step, computing the connections for nodes of $B_0$ in $G(x)$
with respect to $P_i$, works on a graph $G'(x)$ obtained from $G(x)$
by cutting along $P_i$, duplicating the nodes and edges of $P_i$ and
creating a new face whose boundary consists of the two copies of
$P_i$.  This modification destroys paths that cross $P_i$ but such
paths are not needed since $P_i$ is a shortest path.  It has the
advantage that, for each copy $P$ of $P_i$, in $G'(x)$ all nodes of
$P$ lie on a common face.

For each copy $P$, there is a computation that selects connections
$(p,v)$ for specified nodes $v$ with respect to that copy.  The
computation uses an algorithm called {\sc Path}$(G, B, P)$ that takes
time $O((|G| + \text{number of connections})\lg |G|)$ and selects
$O(\epsilon^{-1})$ connections per node $v\in B$.  Since there are two
copies of two paths comprising $S(x)$, the last step of {\sc
  Preprocess} selects $O(\epsilon^{-1})$ connections per node of $B_0$
in $G(x)$.  Therefore the total number of connections for $B_0$ is
$O(\epsilon^{-1} \lg n)$, and the total time is $O(n \lg^2 n + \abs{B_0}
\epsilon^{-1} \lg n)$, which is $O(n \lg^2 n)$.

Now we describe {\sc Path}$(G, B, P)$.  Let the nodes of $P$ be $p_0
\ldots p_s$.  First the algorithm computes\full{\\}{}
$\full{\text{\hspace{1in}}}{}i(v) = \text{argmin}_i
\dist(p_i, v)$ and
$d_v = \min_i \dist(p_i, v)$.\full{\\}{}
These can be computed using a single-source shortest-path computation
in the graph obtained by zeroing out the lengths of the edges of $P$.

For $i=0, 1, \ldots, s$, let $T_i$ denote the shortest-path tree
rooted at $p_i$.  For $i>0$, let $T_i'$ be the tree obtained from
$T_{i-1}$ by removing the parent edge of $p_i$ and adding the edge
$p_i p_{i-1}$, obtaining a $p_i$--rooted tree (not a shortest-path tree).
For $i>0$, let $\sigma_i$ denote a sequence of edges whose insertion into $T'_i$
(followed by the ejection of each corresponding parent edge) result in $T_i$.
\full{ }{}
Klein~\cite{conf/soda/Klein05} shows that each edge is inserted at
most once, and gives an $O(\abs{G}\lg \abs{G})$ algorithm (the {\em
  multiple-source shortest-path algorithm}) to compute these
sequences.  For each such inserted edge $uv$, the algorithm also
computes the resulting change $\Delta_{uv}$ in the length of the
root-to-$v$ path in the tree.
Cabello and Chambers~\cite{conf/soda/CabelloC07} give a simplification
of the multiple-source shortest-path algorithm and generalize it to
bounded-genus in $O(g^2 \abs{G}\lg \abs{G})$ time.
\full{ }{}
The algorithm {\sc Path} uses one of these algorithms to compute the
sequences $\sigma_i$ and the corresponding length changes~$\Delta_{uv}$.

\full{\subsection{The two phases}}{}
The remainder of {\sc Path} consists of two phases, {\sc
  Forward} and {\sc Backward}.
 A connection $(p_i, v)$ might be
added by {\sc Forward} if $i> i(x)$ and by {\sc Backward} if $i <
i(x)$.  We describe {\sc Forward}. {\sc Backward} is symmetric.

\full{\subsection{The {\sc Forward} phase}}{}
The algorithm {\sc Forward} iterates through the nodes $p_0,
\ldots, p_s$ of $P$, maintaining a tree $T$ that is, in turn, $T_0,
T_1, \ldots, T_s$.  The tree $T$ is represented using a dynamic-tree
data structure~\cite{ABHVW04,AHLT05,Frederickson97,ST83,TW05}.  
A node-labeling is maintained:  $\used(v)$ is a quantity
(discussed later) that is
used to decide whether $v$ needs a new connection.  This labeling is
represented implicitly, as is typical in dynamic trees, so as to
support bulk updates.  In this case (somewhat atypically), an update
takes the form ``add a quantity $\Delta$ to the label of every tree in
the subtree rooted at $u$.''  Each update takes $O(\lg n)$ amortized
time.  In addition, searching for a node $v$ that has $\used(v)\leq
0$ takes $O(\lg n)$ time.  

\begin{tabbing}
{\sc Forward}$(G, B, P)$:\\
 \qquad \= initialize $T := T_0$\\
   \> for every node $v$, initialize $\used(v) := \infty$\\
 \> for $i=0, 1,2,\ldots, s$:\\
   \> \qquad \= {\em comment:} $T$ is rooted at $p_i$\\
$\star$ \> \> for each node $v\in B$ such that either $i(v)=i$ or
$\used(v)\leq 0$ ,\\
 \> \> \qquad \= create a connection $(p_i, v)$\\
 \> \> \> set $\used(v) := \epsilon\, d_v$\\
\> \> if $i<s$,\\
 \> \> \>{\em comment:} now change the root...\\
     \> \> \>remove parent edge of $p_{i+1}$ and add edge $p_{i+1} p_i$\\
   \> \> \> {\em comment:} now make the tree a shortest-path tree\\
$\dagger$   \> \> \> for each edge $uv$ in the sequence $\sigma_{i+1}$,\\
    \> \> \> \qquad\= remove the current parent edge of $v$ in $T$, and add $uv$\\
$\ddagger$    \> \> \> \> \= for every active node $w$ in the $v$--rooted subtree of
    $T$,\\
\>\>\>\>\> $\used(w) := \used(w) + \Delta_{uv}$
\end{tabbing}

The overall number of iterations of the loop in Step~$\star$ is the number of
connections added.  The overall number of iterations of the loop in
Step~$\dagger$ is at most the number of edges, which is $O(\abs{G})$.
Step~$\ddagger$ can be done using a single bulk update in $O(\lg \abs{G})$
time.  Consequently, the algorithm runs in time $O((\abs{G}+\text{number of
  connections})\lg \abs{G})$.

Now we show that the algorithm selects a covering set of connections
(and that the set is small).  At each moment in the execution of the
algorithm, for each node $v$ such that $\used(v)$ is finite, let
$\last(v)$ denote the node $p$ of $P$ such that $(p,v)$ was the most
recently selected connection for $v$.

The {\em $\used$ invariant} is: for every node $v$ for which
$\used(v)$ is finite,
\begin{equation} \label{eq:used-invariant}
\used(v) = \epsilon d_v - (\dist(p_i, \last(v))+\dist(\last(v), v) - \dist_T(p_i,v) )
\end{equation}
Note that $\dist(p_i, \last(v)) + \dist(\last(v), v)$ is the length of
the path that goes from the current root $p_i$ to $v$ via $\last(v)$.  When
this length becomes significantly longer than $\dist(p, v)$ (longer by $\epsilon d_v$), 
$\used(v) \leq 0$ so the node $v$ is included in the loop in
Step~$\star$, so the connection $(p_i,v)$ is added.  This shows that the
connections added by {\sc Forward} and {\sc Backward} cover each node
$v\in B$.

To bound the number of connections, we follow Thorup in using the
potential function $\Phi_v = \dist(p_s, \last(v)) + \dist(\last(v),
v)$.  Suppose that, at some execution of Step~$\star$, $\used(v)\leq
0$, so $\dist(p_i, \last(v))+\dist(\last(v), v) - \dist_T(p_i,v)\geq
\epsilon d_v$.  When a connection $(p_i, v)$ is then added, $\last(v)$
becomes $p_i$, so the potential
function $\Phi_v$ is reduced by at least $\epsilon d_v$.

Initially $\Phi_v = \dist(p_s, p_{i(v)}) + \dist(p_{i(v)},
v)$.  Throughout the phase, by the triangle inequality , $\Phi_v \geq
\dist(p_s, v)$.  Again using the triangle inequality (and the fact
that the graph is undirected), $\dist(p_s, p_{i(v)}) \leq \dist(p_s,
v) + \dist(p_{i(v)}, v)$, so $\Phi_v \geq \dist(p_s, v) \geq \dist(p_s, p_{i(v)})
- \dist(p_s, v)$.  Thus the total amount of reduction in $\Phi_v$ is
at most $2\,\dist(p_s, v)$.  Since each reduction is by at least
$\epsilon \, \dist(r_{i(v)}, v)$, the total number of reductions
(number of connections added by {\sc Forward} after the initial one) is at most
$\ceil{2\epsilon^{-1}}$.

\section{Approximate distance oracles for genus~$g$ graphs}
\label{sec:genuslabels}

\begin{theorem}
\genusthm
\label{thm:genusthm}
\end{theorem}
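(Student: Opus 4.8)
The plan is to reduce the genus-$g$ case to the planar case treated in Theorem~\ref{thm:main} by cutting the surface open along a small set of shortest paths. First I would use the standard fact (via a BFS/shortest-path tree and its cotree, or a greedy system of loops) that a graph embedded in a surface of Euler genus~$g$ admits a collection of $O(g)$ shortest paths whose removal — more precisely, cutting the embedding along them — yields a planar graph $G'$, at the cost of duplicating $O(g)$ paths and hence creating $O(g)$ new ``boundary'' paths lying on faces. Each original node lands on $O(1)$ copies in $G'$ (a node on a cut path is duplicated), so $|G'| = O(n + g)$. This is essentially the cutting procedure already invoked in the excerpt for a single shortest path $P_i$ in {\sc Preprocess}, iterated $O(g)$ times; the multiple-source shortest-path machinery of Cabello and Chambers, cited in Section~\ref{sec:planar:prepro}, runs in $O(g^2|G|\lg|G|)$ time and supplies exactly this on bounded-genus graphs, which is where the $g^3$ and $\lg n$ factors in $\genusprepro$ come from.

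Next I would build the planar oracle of Theorem~\ref{thm:main} on $G'$, and in addition store, for every node $v$ and each of the $O(g)$ cutting paths $P_1,\dots,P_{O(g)}$, a covering set of $O(\epsilon^{-1})$ connections of $v$ with respect to $P_j$ in $G$ (using the {\sc Path} algorithm, now on the planar $G'$ where each $P_j$ lies on a face). As in the planar construction, to keep space linear we only store these connections for the $O(n/\sqrt r)$ boundary nodes of an $r$--division with $r = \Theta((\epsilon^{-1}(\lg n + g))^2)$, so that the per-node label size $O(\epsilon^{-1}(\lg n + g))$ times the number of boundary nodes is $O(n)$. For a query $(s,t)$, I would run the planar query algorithm of Section~\ref{sec:planar:query} on $G'$ — which handles every shortest $s$-to-$t$ path that does not cross any cut path — and additionally, for each cutting path $P_j$, run the connection-scanning procedure of Section~\ref{sec:planar:query} to find the shortest $s$-to-$t$ path that touches $P_j$. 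Any shortest path in $G$ either stays in $G'$ or crosses some $P_j$, hence touches some node of $P_j$, so one of these computations returns a $(1+\epsilon)$-approximation; the extra work is $O(g)$ invocations each costing $O(\epsilon^{-1}(\lg n + g))$ per boundary node of $R_s, R_t$, for a query time of $O(\epsilon^{-2}(\lg n + g)^2)$ as claimed.

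The main obstacle I expect is correctness of the reduction: I must argue that cutting along the $O(g)$ shortest paths does not distort distances, i.e. that for any $u,v$ there is a shortest $u$-to-$v$ path in $G$ that is ``respected'' by the cut system in the sense that either it survives intact in some copy in $G'$ or it crosses a cut path at a node where a stored connection certifies it. The subtlety is that a shortest path may cross several cut paths; the cleanest fix is to observe that it suffices to handle the \emph{first} crossing — route within $G'$ (or recursively) up to the first cut path $P_j$ hit, then use a connection on $P_j$ — which works because the subpaths between consecutive crossings live in pieces that are themselves dealt with, and because $P_j$ being a shortest path lets inequality~(\ref{eq:connections-and-shortest-path}) absorb the detour with only a $(1+\epsilon)$ factor. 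A secondary technical point is bookkeeping the node duplication: a query node $s$ may have several copies in $G'$, so the query must consider all of them, but since each node has $O(1)$ copies this only changes constants. Finally, the alternative preprocessing bound $\genusaltprepro$ comes from replacing the exact MSSP-based {\sc Path} computation by the $\epsilon$-dependent covering algorithm with a cruder planarization, which I would state but not belabor.
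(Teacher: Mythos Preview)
Your approach is essentially the paper's: use Eppstein's tree--cotree decomposition to obtain at most $2g$ shortest paths as a top-level separator, apply the planar oracle of Theorem~\ref{thm:main} below that, store connections to the $2g$ cut paths only for boundary nodes of an $r$--division with $r=\Theta((\epsilon^{-1}(\lg n+g))^2)$, and at query time scan the cut-path connections in addition to running the planar query. Your concern about a shortest path crossing several cut paths is unnecessary: the $2g$ paths together are just the root-level separator, so the standard argument applies --- if the shortest $s$--$t$ path touches any $P_j$, connections to that one $P_j$ already give the $(1+\epsilon)$ bound via~(\ref{eq:connections-and-shortest-path}); otherwise the path lies entirely in a planar piece.

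One point does need care. You say the connections to the cut paths $P_j$ are computed ``using the {\sc Path} algorithm, now on the planar $G'$ where each $P_j$ lies on a face,'' but these connections must encode distances in $G$, not in $G'$: a shortest $v$-to-$P_j$ walk in $G$ may itself cross other cut paths and thus be destroyed in $G'$, so $\dist_{G'}(v,p)$ can be arbitrarily larger than $\dist_G(v,p)$ and the covering property~(\ref{eq:cover}) would fail. The paper therefore computes these particular connections directly in the genus-$g$ graph $G$, using either Cabello--Chambers MSSP (time $O(g^2 n\lg n)$ per path, hence $O(g^3 n\lg n)$ over all $2g$ paths --- this is the source of the $g^3$ term in $\genusprepro$) or Thorup's Lemma~3.12 (time $O(\epsilon^{-1} n\lg n)$ per path, yielding the alternative $g/\epsilon$ term in $\genusaltprepro$). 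You cite Cabello--Chambers on bounded-genus graphs for exactly the right running time, so this is likely just loose phrasing on your part, but as literally written --- running {\sc Path} on $G'$ --- it would be a genuine error.
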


Our distance oracle for genus~$g$ graphs is based on separating shortest paths, 
as for planar graphs (see Section~\ref{sec:planar:overview}). Thorup~\cite{ThorupJACM04}
proves that any planar graph can be recursively separated by three shortest paths. 
\full{
Abraham and Gavoille~\cite{AbrahamG06} extend his result to minor-closed families, 
proving that any \minorfr\ graph can be recursively separated by $O(1)$ shortest paths. 
Since bounded-genus graphs exclude minors, we could use their result to obtain a 
linear-space distance oracle. The constant in~\cite{AbrahamG06} however depends 
on the size of the minor in an unspecified way. 
}{}
In the following, we prove that genus~$g$ graphs can be recursively separated using 
at most $O(g)$ shortest paths. In fact, only the first separator consists of at most $2g$ 
paths, while lower levels can be separated using $3$ paths. 
These smaller separators allow us to derive approximate oracles and
labeling schemes with a dependency on $g$ that is much lower than the corresponding 
dependency in the more general construction by Abraham and Gavoille~\cite{AbrahamG06}. 
More formally, we also prove the following. 
\begin{theorem}[fast distance queries for genus~$g$ graphs]
\genusthmfastquery
\label{thm:genusfastquerythm}
\end{theorem}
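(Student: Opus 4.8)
The plan is to \emph{reduce the genus-$g$ case to the planar case}. By the structural fact just established — that the topmost separator of the recursive decomposition is a set of at most $2g$ shortest paths whose deletion leaves planar pieces — I would first compute such a family $\mathcal{P}=\{p_1,\dots,p_k\}$, $k\le 2g$, of shortest paths of $G$, so that the components $G_1,G_2,\dots$ of $G-\bigcup_i p_i$ are planar. On each planar piece $G_j$ I install Thorup's undirected planar oracle in its labeling-scheme form (\cite[Theorem~3.19]{ThorupJACM04}; see Table~\ref{tab:planardoresults}): each node of $G_j$ receives a label of $O(\epsilon^{-1}\lg n)$ entries, and two such labels yield a $(1+\epsilon)$-estimate of the $G_j$-distance in $O(\epsilon^{-1})$ time. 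In addition, for each cut path $p_i$ I equip every node $v$ of $G$ with a set of $O(\epsilon^{-1})$ connections that covers $v$ with respect to $p_i$ in $G$ in the sense of~(\ref{eq:cover}) (with $\dist=\dist_G$), each connection recording the position of its endpoint along $p_i$ and the corresponding value of $\dist_G$. The label of $v$ consists of its planar-piece label (if any) together with these $k$ connection sets; it has $O(\epsilon^{-1}\lg n)+O(g/\epsilon)=O((g+\lg n)/\epsilon)$ entries, and since it is self-contained the oracle is a labeling scheme of that size.

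Given the labels of $s$ and $t$, the query returns the minimum of two things: (a) the planar estimate of $\dist_{G_j}(s,t)$ when $s$ and $t$ lie in a common piece $G_j$, in time $O(\epsilon^{-1})$; and (b) for each $i=1,\dots,k$, the quantity $\min\{\dist_G(s,x)+\dist_{p_i}(x,x')+\dist_G(x',t)\}$ over a connection $(x,s)$ of $s$ and a connection $(x',t)$ of $t$ with respect to $p_i$, computed by the merge procedure of Section~\ref{sec:planar:query} in time $O(\epsilon^{-1})$ per $i$. The total query time is $O(\epsilon^{-1})+O(k\epsilon^{-1})=O(g/\epsilon)$. For correctness, let $Q$ be a shortest $s$-to-$t$ path in $G$. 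If $Q$ avoids $\bigcup_i p_i$ then it lies inside a single piece $G_j$ (which therefore contains both $s$ and $t$) with the same length, so $\dist_{G_j}(s,t)=\dist_G(s,t)$ and branch~(a) returns a value within a $(1+\epsilon)$ factor of $\dist_G(s,t)$. Otherwise $Q$ contains a vertex $x$ of some $p_i$, so $\dist_G(s,t)=\dist_G(s,x)+\dist_G(x,t)$ and $Q$ is a shortest $s$-to-$t$ path that intersects $p_i$; then~(\ref{eq:connections-and-shortest-path}), applied with $P=p_i$ (its proof uses only the triangle inequality and the covering property of the stored connection sets for $s$ and $t$), yields connections $(x_s,s),(x_t,t)$ with $\dist_G(s,x_s)+\dist_{p_i}(x_s,x_t)+\dist_G(x_t,t)\le(1+\epsilon)\,\length(Q)$, where $\dist_{p_i}(x_s,x_t)=\dist_G(x_s,x_t)$ because $p_i$ is a geodesic; this is precisely a term considered in branch~(b) for that $i$. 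Since no branch can underestimate $\dist_G(s,t)$ (every term is the length of a genuine $s$-to-$t$ walk in $G$), the returned value is a $(1+\epsilon)$-approximation.

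For the running time, building the planar oracles on all pieces (of total size $O(n)$) costs $O(n\lg^3 n\,\epsilon^{-2})$ by \cite[Theorem~3.19]{ThorupJACM04}, and computing $\mathcal{P}$ is subsumed in this; charging $O(n\lg n/\epsilon)$ for each of the $k\le 2g$ cut-path connection computations contributes $O(n\lg n\,g/\epsilon)$, for a total of $\genusfastqprepro$. The step I expect to be the main obstacle is precisely this last one: producing, for every node and every $p_i$, a connection set that covers $v$ with respect to $p_i$ and records the \emph{true} $G$-distances, within a per-path budget of only $O(n\lg n/\epsilon)$. A direct multiple-source shortest-path computation along $p_i$ in the genus-$g$ graph (via the bounded-genus algorithm of Cabello and Chambers~\cite{conf/soda/CabelloC07}) costs $\Theta(g^2 n\lg n)$ per path — essentially what the general construction of Theorem~\ref{thm:genusthm} pays, which is why that construction has a $g^3$ preprocessing term — so here one must instead obtain these distances more economically, e.g.\ by routing them through the already-built planar pieces together with a single corrective pass of the {\sc Path} machinery of Section~\ref{sec:planar:prepro} rather than recomputing them from scratch. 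A second, routine but necessary, point is the degenerate case in which $s$ or $t$ itself lies on a cut path $p_i$, so that it corresponds to several vertices after cutting and has no planar-piece label: one either replicates the relevant label fields across all copies, or observes that such a vertex is already covered by its own connection set with respect to $p_i$. Everything else is bookkeeping on top of Thorup's planar oracle and the connection-selection machinery of Section~\ref{sec:planar:prepro}.
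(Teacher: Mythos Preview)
Your overall plan is exactly the paper's: cut the surface open along the $\le 2g$ tree--cotree shortest paths, install Thorup's $O(\epsilon^{-1})$--query planar oracle (\cite[Thm.~3.19]{ThorupJACM04}) on each planar piece, and attach to every node an $O(\epsilon^{-1})$--size connection set for each of the cut paths; your space, query-time, and correctness arguments match the paper's.

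The one point worth correcting is what you flag as the ``main obstacle'': computing, for each cut path $p_i$ and every node $v$, an $\epsilon$--cover with true $G$--distances in time $O(\epsilon^{-1}n\lg n)$ per path. You assume this needs an MSSP primitive (hence Cabello--Chambers at $\Theta(g^2 n\lg n)$ per path) and sketch a workaround that routes distances through the planar pieces. In fact this is a non-issue. The paper simply invokes \cite[Lemma~3.12]{ThorupJACM04}, which computes $\epsilon$--covering sets $\mathcal C(v,Q)$ for \emph{all} nodes $v$ with respect to a given shortest path $Q$ in time $O(\epsilon^{-1}n\lg n)$ using only ordinary single-source shortest-path computations (Thorup's linear-time integer SSSP~\cite{journals/jacm/Thorup99,journals/jal/Thorup00}); it makes no planarity assumption and applies verbatim to the genus-$g$ graph. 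Applying it once per cut path yields the $O(n(\lg n)g/\epsilon)$ term directly. The Cabello--Chambers route is listed only as an alternative (and is what produces the $g^3$ term in Theorem~\ref{thm:genusthm}, where one wants connections for boundary nodes only). So the step you single out as hardest is precisely the one already covered by a black-box citation; your proposed rerouting through the planar pieces is unnecessary and, as stated, too vague to stand on its own.
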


\full{In the following, we assume that $G$ is embedded. }{}

\full{\subsection{Overview}}{\paragraph{Overview}}
In the first step, we ``cut'' the genus~$g$ graph into planar subgraphs using the 
{\em tree-cotree decomposition} of Eppstein~\cite{conf/soda/Eppstein03}, which
decomposes a graph of genus~$g$ into 
planar graphs, separated by $2g$ paths from a tree $T$.   We choose
$T$ to be a shortest-path tree.
\full{
\begin{lemma}[{Corollary of Eppstein~\cite[Proof of Lemma~3.2]{conf/soda/Eppstein03}}] 
Any graph $G$ of genus~$g$ on $n$ nodes and $m$ edges can be divided into 
planar subgraphs by a separator that consists of at most $2g$ shortest paths. 
Furthermore, these paths can be computed using a single-source shortest-path search in $G$ 
plus $O(gm)$ time. 
\end{lemma}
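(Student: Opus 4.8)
The plan is to instantiate Eppstein's tree--cotree decomposition with a shortest-path tree and then read the separator off the fundamental cycles. First I would fix an arbitrary root $r$ and run a single single-source shortest-path search to obtain a shortest-path tree $T$ of $G$; this is the only shortest-path computation needed. Passing to the dual $G^*$ of the embedding, one uses the standard fact --- also the combinatorial core of Eppstein's argument --- that for any spanning tree $T$ of $G$ the duals of the non-tree edges form a connected spanning subgraph of $G^*$; hence a spanning tree $C^*$ of $G^*$ can be greedily extracted using only such dual edges. Let $X\subseteq E(G)$ be the set of primal edges whose duals lie neither in $T$ (automatic) nor in $C^*$. Counting edges in a cellular embedding in a surface of Euler genus $g$ and using Euler's formula $n-m+f=2-g$ gives $\abs{X}=m-(n-1)-(f-1)=g$.

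Next I would invoke Eppstein's Lemma~3.2: $T\cup X$ is a cut graph --- its dual complement is exactly the spanning tree $C^*$ --- so cutting the surface along $T\cup X$ yields a disk, and therefore the graph obtained by cutting $G$ along $T\cup X$ is planar. Since iteratively deleting vertices of degree at most $1$ from $T\cup X$ does not change the homeomorphism type of the cut, it suffices to cut along $Y=\bigcup_{e\in X}L_e$, where $L_e$ denotes the fundamental cycle of $e$ in $T$; even more wastefully one may cut along $Y'=X\cup\bigcup_{e=uv\in X}\bigl(\pi(r,u)\cup\pi(r,v)\bigr)$, which contains $Y$ and is a subgraph of $T\cup X$. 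We take the separator to be the vertex set of $Y'$ (equivalently, of $Y$); each connected component remaining after its removal is a subgraph of a planar graph, hence planar, so $G$ has been divided into planar subgraphs.

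It remains to cover $Y'$ by at most $2g$ shortest paths. For $e=uv\in X$, each of $\pi(r,u)$ and $\pi(r,v)$ is a root-to-node path in the shortest-path tree $T$, hence a shortest path, and so is each of their contiguous pieces; in particular $L_e$, after discarding the shared prefix $\pi(r,\lca(u,v))$, is two shortest-path segments joined by the single edge $e$. Over the $g$ edges of $X$ this is $2g$ shortest paths (together with the $g$ individual edges of $X$, which Eppstein folds into the path count). This is exactly where the choice of $T$ as a shortest-path tree is used, and it is immediate from prefix-closedness of shortest paths from a fixed source. For the running time beyond the one shortest-path search: building $G^*$ and extracting $C^*$ and $X$ takes $O(m)$ time, and tracing $\pi(r,u)$ and $\pi(r,v)$ for the $g$ edges of $X$ takes $O(n)$ time each, i.e.\ $O(gn)=O(gm)$ in total, which is the claimed $O(gm)$ overhead.

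The step I expect to be the main obstacle is the topological assertion that $T\cup X$ (and hence the smaller $Y'$) is a cut graph whose removal planarizes $G$; this is precisely the content of Eppstein's Lemma~3.2, and I would cite it rather than reprove it. Everything else --- taking $T$ to be a shortest-path tree, the edge count $\abs{X}=g$ via Euler's formula, the decomposition of each fundamental cycle into two geodesic segments, and the $O(gm)$ bookkeeping --- is routine.
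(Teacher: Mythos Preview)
Your proposal is correct and follows essentially the same route as the paper: compute a single-source shortest-path tree $T$, form Eppstein's tree--cotree decomposition $(T,C,X)$, and take as separator the root-to-endpoint tree paths for the edges of $X$ together with the edges of $X$ themselves, citing Eppstein for the fact that this planarizes $G$. The paper's proof is much terser---it simply names the decomposition, notes that $T$ is an SSSP tree so the $2g$ root-to-$v_i$ tree paths are shortest paths, and stops---whereas you additionally spell out the Euler-formula count $\abs{X}=g$, the cut-graph justification via the dual spanning tree $C^*$, and the $O(gm)$ bookkeeping; but the argument is the same.
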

\begin{proof}
The existence of these paths is due to Eppstein's {\em tree-cotree decomposition} $(T,C,X)$ of $G$, 
where $T$ is a SSSP tree (any spanning tree works), $C$ is its {\em cotree}, and $X$ is the set of 
remaining edges of size at most $2g$~\cite{conf/soda/Eppstein03}. 

Starting at an arbitrary node $u$, we compute a single-source shortest path tree $T$ and its dual $C$. 
The separator consists of at most $2g$ tree paths from $u$ to $v_i$ for each of the remaining edges 
$(v_0,v_1)\in X$, and of the edges in $X$. 
\qed
\end{proof}
}{}
At a high level, the theorems follow by combining Eppstein's lemma\full{}{~\cite[Proof of Lemma~3.2]{conf/soda/Eppstein03}} with the distance oracles for planar graphs 
(Thorup~\cite{ThorupJACM04} and Sections~\ref{sec:planar} and \ref{sec:planar:prepro}). 
Within the planar subgraphs, we use the distance oracles for planar graphs.  In addition to computing the 
connections to the separator paths within each planar subgraph, we also need to compute the connections 
to the $O(g)$ tree-cotree decomposition paths. Note that the latter set of connections consists of paths that 
may pass through non-planar parts. To compute these, we may use either~\cite{conf/soda/CabelloC07} 
or~\cite[Lemma~3.12]{ThorupJACM04}, depending on the values of $g$ and $\epsilon$. 

\full{\subsection{Preprocessing algorithm}\label{sec:genus:prepro}}{\paragraph{Preprocessing algorithm}}

\full{\paragraph{Within planar subgraphs}}{}
To obtain the preprocessing and space bounds in Theorem~\ref{thm:genusthm}, 
we use the preprocessing algorithm described in Section~\ref{sec:planar:prepro} 
with $r:=\ell^2$, where $\ell=O(\epsilon^{-1}(\lg n + g))$. Since the 
number of connections per node is proportional to $\ell$ and since a 
$1/\sqrt r$--fraction of the nodes per subgraph lies on the boundary, the overall 
space consumption is linear. 
\full{ }{}
To obtain the preprocessing and space bounds in Theorem~\ref{thm:genusfastquerythm}, 
we use Thorup's algorithm~\cite[Thm.~3.19]{ThorupJACM04} for $O(1/\epsilon)$ query time. 

\full{\paragraph{Connections to tree-cotree separator}}{}
There are two options to compute \full{these connections}{connections to the tree-cotree separator}: 
(1) We may use~\cite[Lemma~3.12]{ThorupJACM04} (which internally uses Thorup's $O(m)$ 
SSSP algorithm~\cite{journals/jacm/Thorup99,journals/jal/Thorup00}). 
The lemma states that, for a path $Q$, we can compute an $\epsilon$--covering set $\mathcal C(v,Q)$ 
for all nodes $v$ in time $O(\epsilon^{-1}n(\lg n))$. 
(2) We may use the MSSP data structure for genus~$g$ graphs by Cabello and 
Chambers~\cite{conf/soda/CabelloC07}, which requires $O(g^2n\lg n)$ preprocessing 
and then answers queries in time $O(\lg n)$. See planar preprocessing (Section~\ref{sec:planar:prepro}) 
for details. The time required is $O(g^2n\lg n+\text{number of connections}\cdot\lg n)=O(g^2n\lg n)$. 
\full{ }{}
We apply either lemma for the at most $2g$ paths of the tree-cotree decomposition. (For 
Theorem~\ref{thm:genusfastquerythm}, the first option gives faster asymptotic preprocessing time; 
for Theorem~\ref{thm:genusthm}, the optimal choice depends on $\epsilon$ and~$g$.)

\full{\subsection{Query algorithm}}{\paragraph{Query algorithm}}
At query time, we can essentially use the same algorithm as for the planar case (Section~\ref{sec:planar:query} 
and \cite[Thm.~3.19]{ThorupJACM04}). 
The only difference to the planar case is that we also need to include the at most $2g$ paths separating the genus graph 
into planar subgraphs. 
To obtain the bound on the query time in Theorem~\ref{thm:genusthm}, note that computing 
connections through these $\leq2g$ separating paths can be done in time $O(\ell g/\epsilon)$ and 
that exploring both regions took time $O(\ell^2)$ (where $\ell=O(\epsilon^{-1}(\lg n + g))$).

\full{
\section{Linear-space approximate distance oracle for $H$--\minorfr\ graphs}
\label{sec:minorfr}

\begin{theorem}
\minorthm
\label{thm:minorthm}
\end{theorem}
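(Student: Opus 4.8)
The plan is to combine two ingredients that the excerpt already supplies: the recursive shortest-path separator decomposition of Abraham and Gavoille~\cite{AbrahamG06} for $H$--\minorfr\ graphs, and the linear-space reduction of Section~\ref{sec:planar} (the same black-box device sketched for planar digraphs in Section~\ref{sec:planar:extensions}). By~\cite{AbrahamG06} there is an integer $h=h(H)$ and a polynomial-time algorithm that, given an $H$--\minorfr\ graph $G$, builds a binary decomposition tree of depth $O(\lg n)$ in which every nonroot node $x$ carries a balanced separator $S(x)$ that is the union of at most $h$ shortest paths of $G(x)$, and computes, for every node $v$ and every separator path $P$ relevant to $v$, a set of $O(\epsilon^{-1})$ connections covering $v$ in $G(x)$ with respect to $P$ in the sense of~(\ref{eq:cover}). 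Each node is relevant to $O(h\lg n)$ separator paths, so this is $O(h\epsilon^{-1}\lg n)$ connections per node; storing all of them (with the associated distances) would cost $\Theta(h\epsilon^{-1}n\lg n)$ space. As in Section~\ref{sec:planar}, we keep only the connections of the boundary nodes of an $r$--division.

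First I would record that $H$--\minorfr\ graphs admit $r$--divisions with the parameters of Definition~\ref{def:kleinsubramanian}: recursively applying a balanced vertex-separator theorem for $H$--\minorfr\ graphs (which yields separators of size $c(H)\sqrt n$) produces, in polynomial time, a partition of $E(G)$ into $O(n/r)$ edge-induced regions, each with $O(r)$ edges and $O(c(H)\sqrt r)$ boundary nodes, hence $O(c(H)\,n/\sqrt r)$ boundary nodes in total --- the exact analogue of the planar $r$--division of~\cite{journals/siamcomp/Frederickson87} used in Section~\ref{sec:planar}. Run the preprocessing of~\cite{AbrahamG06} on $G$ as a black box, then discard every stored connection except those belonging to a boundary node. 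The surviving connections and distances occupy $O(c(H)\,h\,\epsilon^{-1} n \lg n / \sqrt r)$ space; choosing $r = \Theta((c(H)h\epsilon^{-1}\lg n)^2)$ makes this $O(n)$, which is $O(m)$ since $H$--\minorfr\ graphs are sparse, and (as in Section~\ref{sec:planar}) the total space can be made at most $(1+\delta)$ times the space needed to store $G$ itself. Preprocessing remains polynomial.

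The query for $s,t$ is handled exactly as in Section~\ref{sec:planar}. Run a single-source shortest-path computation inside $R_s$ and inside $R_t$; since each region has $O(r)$ edges this costs $O(r)$ time up to logarithmic factors and yields $\dist_{R_s}(s,b)$ for every boundary node $b$ of $R_s$ (and $\dist_{R_s}(s,t)$ if $t\in R_s$), and symmetrically for $R_t$. Let $A$ and $B$ be the connection sets of the boundary nodes of $R_s$ and of $R_t$. For each separator path $P$ with connections in both $A$ and $B$, apply the procedure of Section~\ref{sec:planar:query} to obtain the length of the shortest $s$-to-$t$ walk of the form $s\leadsto b \to P \to b' \leadsto t$ with $b\in\partial R_s$ and $b'\in\partial R_t$; return the minimum over all such $P$ (and $\dist_{R_s}(s,t)$ when applicable). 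Correctness is inherited from Section~\ref{sec:planar:overview}: if $Q$ is the true shortest $s$-to-$t$ path and $x$ is the decomposition node most relevant to $Q$, then $Q\subseteq G(x)$ and $Q$ meets some $P\in S(x)$; writing $b$ for the first boundary node of $R_s$ on $Q$ and $b'$ for the last boundary node of $R_t$ on $Q$, the prefix $s\leadsto b$ and suffix $b'\leadsto t$ of $Q$ stay inside $R_s$, resp.\ $R_t$, and are themselves shortest paths (hence recovered exactly by the local searches), while the middle $b$-to-$b'$ subpath is a shortest $b$-to-$b'$ path meeting $P$, so by~(\ref{eq:connections-and-shortest-path}) the connections of $b$ and $b'$ with respect to $P$ in $G(x)$ estimate its length within a factor $1+\epsilon$. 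Thus the returned value lies between $\dist(s,t)$ and $(1+\epsilon)\dist(s,t)$. The query time is $O(r)$ (up to logarithmic factors) plus $O(|A|+|B|)=O(c(H)\sqrt r\cdot h\epsilon^{-1}\lg n)$, which for the chosen $r$ is $O(h\epsilon^{-2}\lg^2 n)$ once the constants depending only on $H$ are absorbed.

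I expect the only real work, beyond transcription, to be (i) pinning down the $r$--division for $H$--\minorfr\ graphs with edge-induced regions and a boundary bound of $O(\sqrt r)$ with the hidden constant depending on $H$ (a routine recursion on balanced separators), and (ii) checking that the two structural facts underlying the planar query argument transfer verbatim to the~\cite{AbrahamG06} decomposition, namely that the decomposition node most relevant to $Q$ contains all of $Q$ and that a prefix or suffix of a shortest path is again a shortest path; both are generic properties of recursive decompositions built from shortest-path separators. We deliberately keep the dependence on $h$ coarse (polynomial preprocessing, $O(m)$ space, one extra factor of $h$ in the query), since the separator constant of~\cite{AbrahamG06} is not given explicitly.
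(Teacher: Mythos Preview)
Your proposal is correct and follows essentially the same route as the paper: Abraham--Gavoille shortest-path separators plus an $r$--division of the minor-free graph, storing connections only at boundary nodes and answering queries by local SSSP plus the merge procedure of Section~\ref{sec:planar:query}. The paper obtains the $r$--division via Kawarabayashi--Reed (Lemma~\ref{lemma:minordivision}) and computes connections for boundary nodes directly via~\cite[Lemma~3.12]{ThorupJACM04}, whereas you recurse on generic $O(\sqrt n)$--separators and use~\cite{AbrahamG06} as a black box before discarding; both variants are equivalent for the stated bounds. One small point: to get the in-region SSSP down to $O(r)$ rather than $O(r\lg r)$ (and hence meet the stated $O(h\epsilon^{-2}\lg^2 n)$ query time without the stray log you flag), the paper invokes the linear-time SSSP algorithm for minor-free graphs of~\cite{journals/dam/TazariM09}; you should cite it as well.
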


The proof of Theorem~\ref{thm:minorthm} is structurally the same as for the planar case. 
We again use $r$--divisions~\cite{journals/siamcomp/Frederickson87}, this time 
tailored to \minorfr\ graphs using a separator algorithm by Kawarabayashi and 
Reed~\cite{conf/focs/KawarabayashiR10}\footnote{Alternatively, Frederickson's 
algorithm could be combined with the separator algorithm by 
Alon, Seymour, and Thomas~\cite{AlonST94} to obtain an $r$--division with 
$O(\abs H^{3/2}\sqrt r)$ boundary vertices per region or with the separator algorithm by 
Reed and Wood~\cite{journals/talg/ReedW09} for  
$O(\abs H^{3/2}2^{(\abs H^2+4)/2}r^{2/3})$ boundary vertices per region. 
See also~\cite{journals/dam/TazariM09} for such modifications. 
}. 

An {\em $r$--division} of an  $H$--minor-free graph is a division
into $\Theta(n/r)$ regions of $O(r)$ vertices each and $O(\abs H\sqrt
r)$ boundary vertices each. (Note that the boundary is larger by a factor $\abs H$ 
compared to the boundary in the planar case.)
\begin{lemma}[{Frederickson~\cite{journals/siamcomp/Frederickson87}, combined with Kawarabayashi and Reed~\cite{conf/focs/KawarabayashiR10}}]
An $H$--minor-free graph on $n$ vertices can be divided into an
$r$--division in $O(n^2\lg n)$ time.
\label{lemma:minordivision}
\end{lemma}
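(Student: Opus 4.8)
The plan is to treat Frederickson's $r$--division construction~\cite{journals/siamcomp/Frederickson87} as a black box that reduces the task to two ingredients: a vertex-separator theorem for the class, and an algorithm that computes, for an arbitrary vertex weighting, a separator that is balanced with respect to that weighting. For planar graphs these are supplied by the $O(\sqrt n)$--vertex planar separator~\cite{LT79} and its linear-time algorithm; I would simply substitute the separator algorithm of Kawarabayashi and Reed~\cite{conf/focs/KawarabayashiR10}, which computes a balanced separator of order $O(\abs H\sqrt n)$ in polynomial time in any $H$--\minorfr\ graph (the footnote's alternatives---Alon--Seymour--Thomas~\cite{AlonST94}, Reed--Wood~\cite{journals/talg/ReedW09}, and the adaptations of~\cite{journals/dam/TazariM09}---would serve equally well, with different constants). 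The observation that makes this substitution legitimate is that the class of $H$--\minorfr\ graphs is closed under taking subgraphs: every piece produced in the recursion is a subgraph of $G$, hence again $H$--\minorfr, so the separator algorithm applies at every level, and the factor $\abs H$ merely multiplies all separator orders and is absorbed into the $O(\cdot)$ in the definition of an $r$--division for $H$--\minorfr\ graphs.

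Concretely, I would follow Frederickson's two phases. \emph{Phase one:} recursively bisect, applying the separator algorithm to split the current piece $G'$ into $A,B$ with $\abs A,\abs B\le\tfrac23\abs{G'}$ and separator $S$ of order $O(\abs H\sqrt{\abs{G'}})$, recursing on $A\cup S$ and on $B\cup S$, and halting when a piece has $O(r)$ vertices; because the splits are balanced and $\abs S=o(\abs{G'})$, the recursion has depth $O(\lg n)$, the total number of vertices over all pieces at any fixed depth stays $O(n)$, and we obtain $O(n/r)$ regions, each with $O(r)$ vertices. \emph{Phase two:} repair regions having more than $c\abs H\sqrt{r}$ boundary vertices by repeatedly invoking the separator algorithm with the weighting that assigns $1$ to boundary vertices and $0$ elsewhere, until every piece has $O(\abs H\sqrt{r})$ boundary vertices; Frederickson's charging argument then bounds the growth in the number of regions by a constant factor, so the final count is still $O(n/r)$ while region sizes remain $O(r)$.

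For the running time, let $T(k)$ be the cost of one separator computation on a $k$--vertex $H$--\minorfr\ graph, which is polynomial in $k$ by~\cite{conf/focs/KawarabayashiR10}. Summing $T(\cdot)$ over Frederickson's recursion tree, the geometric decay of piece sizes (using, e.g., $\sum_i(2/3)^i\le 3$) together with the $O(\lg n)$ recursion depth bounds the total by a fixed polynomial; I would then verify, without attempting to optimize, that it stays within $O(n^2\lg n)$, which is all we claim. The one step I expect to require genuine care is Frederickson's phase-two accounting: showing that the splitting needed to balance boundary vertices inflates the number of regions only by a constant factor, so that it remains $O(n/r)$. I expect this to transfer verbatim from the planar argument, the sole change being the uniform extra factor $\abs H$ in all separator orders; the point to double-check is that the weighted separator algorithm really delivers, for every region it is run on, a separator of order $O(\abs H\sqrt{r})$ that is balanced with respect to the boundary-vertex weighting.
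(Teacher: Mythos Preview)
The paper does not give its own proof of this lemma; it is stated as a direct citation, attributing the result to Frederickson's $r$--division framework combined with the Kawarabayashi--Reed separator algorithm, with the footnote listing the same alternative separator algorithms you mention. Your sketch is precisely the combination the citation intends---plug a polynomial-time $O(\abs H\sqrt n)$ separator for $H$--\minorfr\ graphs into Frederickson's two-phase recursion, using closure under subgraphs to justify recursing---so your approach is correct and matches the paper's (implicit) argument.
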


\begin{proof}[of Theorem~\ref{thm:minorthm}]
The proof is a combination of three techniques: (i) Lemma~\ref{lemma:minordivision}, (ii) shortest 
path separators in \minorfr\ graphs by Abraham and Gavoille~\cite{AbrahamG06}, and (iii) Thorup's 
$\epsilon$--covers (as described in Section~\ref{sec:planar:overview}). 

Abraham and Gavoille~\cite{AbrahamG06} prove that any $H$--\minorfr\ graph can be recursively 
separated by $k(H)$ shortest paths and that these paths can be found in polynomial
time. We first compute these separator paths. 
We then compute an $r$--division for $r:=(\ell\cdot\abs H)^2$ as in Lemma~\ref{lemma:minordivision}. 
For all the nodes on the boundary, we compute connections to these shortest 
paths~\cite[Lemma~3.12]{ThorupJACM04} (as in Section~\ref{sec:genus:prepro}). 
We store $\ell=O(\epsilon^{-1}\lg n)$ connections per node (where $\ell$ 
depends on $\abs H$ in an unspecified way) for $O(n/\ell)$ nodes. 
The total space requirement is thus $O(m)$ (without 
further dependencies on $H$). 

At query time, given a pair $(s,t)$, we first explore both regions $R_s$ and $R_t$, respectively, 
using an SSSP search~\cite{journals/dam/TazariM09} until all the boundary nodes of these 
two regions have been found. This step requires time $O(\ell^2)$ (hiding 
further dependencies on $H$ stemming from~\cite{journals/dam/TazariM09}). 
As described in Section~\ref{sec:planar:query}, we can then merge clean and ordered covers 
in linear time~\cite[Lemma~3.6]{ThorupJACM04}. 
\qed
\end{proof}
}{}

\full{
\section{Linear-space approximate distance oracle for unit-length graphs with bounded doubling dimension}
\label{sec:doubling}

The distance oracles for planar, bounded-genus, and minor-free graphs heavily used the 
notion of {\em separators}. In the following, we show that separators are not the only way 
to obtain linear-space approximate distance oracles. 
Our linear-space approximate distance oracle for bounded-doubling-dimension graphs 
exploits the bounded-growth property. 

The {\em aspect ratio}
(also known as {\em spread}) of $P\subseteq V$, denoted by
$\Delta(P)$, is the ratio of the diameter of $P$ and the distance
between the closest pair of nodes in $P$.
It is well known that for any $\lambda$--doubling metric $\mathcal M$,
any set of points $P\subset\mathcal M$ with aspect ratio at most
$D\geq\Delta(P)$ satisfies $\abs{P}\leq\lambda^{O(\lg D)}$.

We use this fact to obtain the following. 

\begin{theorem}
\doublingcor
\label{thm:doublingthm}
\end{theorem}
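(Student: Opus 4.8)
The plan is to obtain the oracle from a generic reduction and then instantiate it. \emph{Reduction:} given a $(1+\epsilon')$--approximate distance \emph{labeling} scheme for $\lambda$--doubling metrics with label length $\ell$ and decoding time $q$, I would build a linear-space $(1+2\epsilon')$--approximate distance \emph{oracle} with query time $\lambda^{O(\lg(\ell/\epsilon'))}+q$. \emph{Instantiation:} plug in the $(1+\epsilon)$--approximate labeling schemes of~\cite{conf/stoc/Talwar04,SlivkinsPODC05JournalVersion,MendelHarPeled}, whose label length is $\ell=(1/\epsilon)^{O(\alpha)}(\lg\Delta)^{O(1)}$ and whose decoding time is at most $\ell$; then $\lg(\ell/\epsilon)=O(\alpha\lg(1/\epsilon)+\lg\lg\Delta)$, so $\lambda^{O(\lg(\ell/\epsilon))}=2^{O(\alpha(\alpha\lg(1/\epsilon)+\lg\lg\Delta))}=(1/\epsilon)^{O(\alpha^2)}(\lg\Delta)^{O(\alpha)}$, which absorbs the additive $q\le\ell$ and yields exactly the claimed query time. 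The motivation for the reduction is that storing a label at every node costs $\Theta(n\ell)$, which is superlinear, so the oracle should store labels only at a sparse set of \emph{portals} and recover the rest by cheap local search --- cheap because in a unit-length doubling graph every ball of radius $\rho$ has aspect ratio $O(\rho)$ and hence, by the bounded-growth fact recalled above, only $\lambda^{O(\lg\rho)}$ nodes.

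First I would fix $d:=\ell$ and construct a \emph{$d$--covering} portal set $B\subseteq V$ --- every node within graph-distance $d$ of $B$ --- with $\abs{B}=O(n/d)$, in linear time. Take a BFS tree $T$ of $G$ and repeat: let $v$ be a deepest remaining node, let $a$ be the ancestor of $v$ at tree-distance $\min(d,\depth(v))$ above $v$, add $a$ to $B$, and delete the subtree rooted at $a$ (what remains is still a single tree). Every deleted node is a descendant of $a$ no deeper than $v$, hence within tree-distance, and so graph-distance, $d$ of $a\in B$; and a non-final iteration has $\depth(v)>d$, so its deleted subtree contains the whole $a$--to--$v$ path, i.e.\ more than $d$ nodes, giving $\abs{B}=O(n/d)$. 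Assuming $G$ connected (otherwise run this per component, using that a $k$--node component has $\ge k-1$ edges) we have $m=\Omega(n)$, so the oracle --- which stores $G$, the set $B$, and one length-$\ell$ label per portal, the labels computed by the labeling scheme on the whole-graph metric --- uses $O(m)+O(\abs{B}\ell)=O(m)$ space; preprocessing is polynomial, dominated by the labeling-scheme preprocessing.

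Next the query algorithm for $(s,t)$: set $\rho:=12d/\epsilon'=\Theta(\ell/\epsilon')$ and run BFS from $s$ truncated at radius $\rho$; if $t$ is reached, return the exact distance. Otherwise $\dist(s,t)>\rho$, and that BFS has already seen a portal $b_s$ with $\dist(s,b_s)\le d$ (one exists since $B$ is $d$--covering and $d\le\rho$); run a second BFS from $t$ truncated at radius $d$ to find $b_t$ with $\dist(t,b_t)\le d$, and return $\dist(s,b_s)+\widetilde{\delta}+\dist(b_t,t)$, where $\widetilde{\delta}$ is the labeling scheme's estimate of $\dist(b_s,b_t)$ from the stored labels. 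The triangle inequality makes this value at least $\dist(s,t)$; for the upper bound, $\widetilde{\delta}\le(1+\epsilon')(\dist(b_s,s)+\dist(s,t)+\dist(t,b_t))$, so the returned value is at most $(2+\epsilon')\cdot 2d+(1+\epsilon')\dist(s,t)\le 6d+(1+\epsilon')\dist(s,t)<(\epsilon'/2)\dist(s,t)+(1+\epsilon')\dist(s,t)\le(1+2\epsilon')\dist(s,t)$, using $\dist(s,t)>\rho=12d/\epsilon'$. Each truncated BFS visits only nodes within distance $\rho$ (resp.\ $d$) of its source, a ball of aspect ratio $\le 2\rho$ and hence $\le\lambda^{O(\lg\rho)}$ nodes; since a $\lambda$--doubling metric has maximum degree $\lambda^{O(1)}$, the BFS touches $\lambda^{O(\lg\rho)}$ edges in $\lambda^{O(\lg\rho)}$ time, which with $\rho=\Theta(\ell/\epsilon')$ is $\lambda^{O(\lg(\ell/\epsilon'))}$; adding the decoding time $q$ gives the reduction's bound. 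Finally, replacing $\epsilon'$ by $\epsilon/2$ restores stretch $1+\epsilon$.

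Each step is individually routine; the part to be careful about is the three-way balance among $d$ (portal sparsity, controlling space), $\rho$ (search radius, controlling query time), and $\epsilon'$ (detour tolerance, forcing $\rho\gg d/\epsilon'$), and in particular the use of the hypothesis that $G$ has \emph{unit} lengths: it is precisely this that guarantees a minimum interpoint distance of $1$, hence aspect ratio $O(\rho)$ for a radius-$\rho$ ball, hence the $\lambda^{O(\lg\rho)}$ size bound that makes the local searches affordable. Lifting this to arbitrary edge lengths --- where a radius-$\rho$ ball can contain arbitrarily many nodes --- would be the principal obstacle, and is why the theorem is stated for unit lengths.
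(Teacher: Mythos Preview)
Your proof is correct and follows essentially the same approach as the paper: store labels only at a sparse $\ell$--dominating (``portal'') set, handle close pairs by a local BFS of radius $\Theta(\ell/\epsilon)$, handle far pairs by triangulating through the nearest portals, and bound the BFS cost via the bounded-growth property of unit-length doubling graphs. The differences are purely presentational: you frame the argument as a generic labels-to-oracle reduction and spell out the Kutten--Peleg dominating-set construction, whereas the paper cites~\cite{KP98} directly and precomputes each node's nearest portal instead of recovering it by a second BFS at query time.
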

Note that (1) for unit-length graphs $\Delta=O(n)$ and thus the query time 
for constant $\alpha$ is $O(poly(\lg n,1/\epsilon))$ and (2)
our result also holds for unweighted geometric graphs such 
as those considered in~\cite{GudmundssonLevcopoulosNarasimhanSmid}. 
Our approach extends to graphs with moderate edge lengths;
the dependency on the largest weight is however polynomial and not logarithmic.

In our proof we use the following approximate distance labeling scheme.

\begin{lemma}[{Har-Peled and Mendel~\cite[Proposition~6.10]{MendelHarPeled}}]
For a metric with doubling dimension $\alpha$ and aspect ratio $\Delta=\Delta(V)$
and for any $\epsilon > 0$, there exists a $(1 +
\epsilon)$--approximate distance labeling scheme with label length
$(1/\epsilon)^{O(\alpha)}\lg\Delta$~bits per node and query time $2^{O(\alpha)}$. 
\label{lemma:doublinglabel}
\end{lemma}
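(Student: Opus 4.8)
The plan is to build the labeling scheme on top of a hierarchy of nets, in the standard way for doubling metrics. Normalize so that the minimum interpoint distance is~$1$; then every distance lies in $[1,\Delta]$, and set $I=\ceil{\lg\Delta}$. For each scale $i\in\{0,\dots,I\}$ fix a $2^i$--net $N_i\subseteq V$: a set that is $2^i$--separated (any two of its points are at distance $\ge 2^i$) and $2^i$--covering (every point of $V$ lies within distance $2^i$ of some point of $N_i$); take $N_0=V$, and note $N_I$ is a single point since the diameter is at most $\Delta\le 2^I$. For $i<I$ and $x\in N_i$ choose a parent $\mathrm{par}(x)\in N_{i+1}$ with $d(x,\mathrm{par}(x))\le 2^{i+1}$; this makes $\bigcup_i N_i$ a rooted tree, the \emph{net--tree}, whose level--$j$ ancestor $\mathrm{anc}_j(x)$ of $x\in N_i$ satisfies $d(x,\mathrm{anc}_j(x))<2^{j+1}$. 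The one fact used repeatedly is the packing bound already stated in the excerpt: in a metric of doubling dimension $\alpha$ (doubling constant $\lambda=2^\alpha$) any ball of radius~$R$ contains at most $(R/r)^{O(\alpha)}$ points that are pairwise $r$--separated. Hence each net--tree node has $2^{O(\alpha)}$ children, and for every $v\in V$ and every~$i$ the neighbourhood $Z_i(v):=\{w\in N_i: d(v,w)\le c\,2^i/\epsilon\}$ has size $(1/\epsilon)^{O(\alpha)}$, for an absolute constant~$c$ fixed below.

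The label of a point~$v$ (note $v\in N_0$) records, for every scale $i=0,\dots,I$, the set $Z_i(v)$ together with, for each $w\in Z_i(v)$, the distance $d(v,w)$ rounded to additive error $\epsilon\,2^i$ (so $O(\lg(1/\epsilon))$ bits per distance), indexed so that the entry for a given net point can be retrieved in $O(1)$ time. Each net point is named by its canonical address in the net--tree (the sequence of local child-indices from the root down to it), which has length $O(\alpha\lg\Delta)$ bits; a naive accounting then gives label length $(1/\epsilon)^{O(\alpha)}\lg^2\Delta$. The extra logarithmic factor is removed by sharing address prefixes among the entries of a fixed scale: every member of $Z_i(v)$, and also $\mathrm{anc}_i(v)$, lie within $O(2^i/\epsilon)$ of one another, so their addresses agree with that of~$v$ except in the last $O(\lg(1/\epsilon))$ levels, and storing~$v$'s own address once plus an $O(\alpha\lg(1/\epsilon))$--bit suffix per entry suffices; summed over scales this is $(1/\epsilon)^{O(\alpha)}\lg\Delta$ bits, as required.

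To answer a query $(u,v)$ from $L(u),L(v)$, first obtain a constant--factor estimate $\widetilde d$ of $d(u,v)$: using the ancestor information in the two labels, locate the coarsest scale~$i$ at which the constant--radius sublists of $Z_i(u)$ and $Z_i(v)$ fail to share a point and set $\widetilde d=\Theta(2^i)$ — the upper bound because the common point at the next coarser scale is within $O(2^i)$ of both endpoints, the lower bound because if $d(u,v)\le 2^{i-1}$ then the point of $N_{i-1}$ covering~$v$ would also be close to~$u$, a contradiction; because the relevant predicates are monotone in the scale, this search costs $2^{O(\alpha)}$. Now set $i^\star:=\ceil{\lg(\epsilon\,\widetilde d)}$ and let $w\in N_{i^\star}$ be the net point covering~$u$, so $d(u,w)\le 2^{i^\star}=O(\epsilon\,d(u,v))$; since $d(v,w)\le d(u,v)+2^{i^\star}\le c\,2^{i^\star}/\epsilon$ for $c$ large enough, $w\in Z_{i^\star}(v)$, so $d(v,w)$ (rounded) is stored in $L(v)$ and $d(u,w)$ (rounded) in $L(u)$. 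Return $d(u,w)+d(v,w)$: by the triangle inequality this is at least $d(u,v)$, and $d(u,w)+d(v,w)\le d(u,v)+2\,d(u,w)\le(1+O(\epsilon))\,d(u,v)$ after absorbing the rounding errors, which are themselves $O(2^{i^\star})=O(\epsilon\,d(u,v))$; rescaling~$\epsilon$ by a constant gives stretch $1+\epsilon$. Every step — computing $i^\star$, extracting $w$'s address and $d(u,w)$ from $L(u)$, and the single indexed lookup in $L(v)$ — costs $2^{O(\alpha)}$.

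The two delicate points are exactly (i) the compact, \emph{consistent} naming of net points: the names must be intrinsic enough that $L(u)$ and $L(v)$ refer to the same net point by the same token, yet short enough to avoid a $\lg n$ term — this forces the prefix-sharing device above and a careful bound on how far apart, in the net--tree, the members of one neighbourhood can be; and (ii) driving the query time down to $2^{O(\alpha)}$ independent of both $\epsilon$ and~$\Delta$, which relies on locating the correct scale~$i^\star$ without touching all $\Theta(\lg\Delta)$ levels and on the neighbourhood lists (of size $(1/\epsilon)^{O(\alpha)}$, too large to scan) being indexed for $O(1)$ retrieval. Everything else is routine net/packing bookkeeping, and all the nets and the net--tree are built by a standard $\mathrm{poly}(n)$--time greedy procedure, which is all the statement requires.
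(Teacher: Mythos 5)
The paper itself does not prove this lemma: it is imported from Har-Peled and Mendel's Proposition~6.10 and used as a black box in Section~\ref{sec:doubling}, so there is no in-paper proof to compare against and what you wrote is necessarily a from-scratch reconstruction. Your net-hierarchy skeleton is the right framework (and in the spirit of the cited proof), and the packing bounds, the $(1/\epsilon)^{O(\alpha)}$ size of $Z_i(v)$, the choice of $i^\star$, and the $1+O(\epsilon)$ stretch calculation are all correct. However, the two points you yourself flag as ``delicate'' are not just delicate but, as written, gaps.

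First, the prefix-sharing claim that makes the label length drop from $(1/\epsilon)^{O(\alpha)}\lg^2\Delta$ to $(1/\epsilon)^{O(\alpha)}\lg\Delta$ does not hold for an arbitrary net-tree. If $d(v,w)\le c\,2^i/\epsilon$ then the level-$j$ ancestors $a_j(v),a_j(w)$ are within $2^{j+2}+c\,2^i/\epsilon=O(2^j)$ of each other once $2^j\gtrsim 2^i/\epsilon$, but $N_j$ is only $2^j$-separated, so there may be $2^{O(\alpha)}$ distinct candidate ancestors in that ball and nothing in the generic parent-choice rule (``any covering net point,'' or even ``nearest covering net point'') forces the two chains to merge at any finite level short of the root. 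So the addresses of $w\in Z_i(v)$ need not agree with $v$'s above level $i+O(\lg(1/\epsilon))$, and your accounting collapses back to the naive $\lg^2\Delta$ bound. Repairing this requires either a net-tree built with an explicit ancestor-merging/Lipschitz property or a different compact naming device; this is precisely where the nontrivial work in the cited proof lives. Second, the $2^{O(\alpha)}$ query time: your scale search is a binary search over $\Theta(\lg\Delta)$ levels for a monotone predicate, which costs $\Theta(\lg\lg\Delta)$ probes, each $2^{O(\alpha)}$, for a total of $2^{O(\alpha)}\lg\lg\Delta$ rather than $2^{O(\alpha)}$. To get rid of the $\lg\lg\Delta$ you need some extra field in the labels that pins down the correct query scale in $O(1)$ comparisons; you gesture at ``monotonicity'' but do not supply such a device, and together with the indexed-lookup/perfect-hashing step this is the other half of what makes the result go through.
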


\begin{proof}[of Theorem~\ref{thm:doublingthm}]
Let $\delta < n$ be an integer.  A {\em $\delta$--dominating set} of a graph
$G = (V,E)$ is a subset $L \subseteq V$ of nodes such that for each
$v\in V$ there is a node $l\in L$ at distance at most $\delta$.  It is
well-known that there is a $\delta$--dominating set $L$ of size at
most $\abs{L}\leq n/(\delta+1)$ and that such a set $L$ can be found
efficiently~\cite{KP98}.

Let $G=(V,E)$ be a graph that allows for
a $(1+\epsilon)$--approximate distance labeling scheme with 
label length $\ell=(1/\epsilon)^{O(\alpha)}\lg\Delta$ (Lemma~\ref{lemma:doublinglabel}). 
We store the distance labels for 
 the nodes of an $\ell$--dominating set, which requires total
space $O(n)$. For each unlabeled node $v$, we also store its 
nearest labeled node $l(v)$. 

At query time, given $s,t\in V$, we distinguish between
``close'' pairs (distance at most $\ell/\epsilon$) and ``far'' pairs (otherwise). 
For ``close'' pairs, we explore (using BFS) the ball $B(s)$ of radius $d=O(\ell/\epsilon)$ 
around $s$. If $t\in B(s)$, the exact distance can be returned (a ``close'' pair).
Recall that for any $\lambda$--doubling metric $\mathcal M$,
any set of points $P\subset\mathcal M$ with aspect ratio at most
$D\geq\Delta(P)$ satisfies $\abs{P}\leq\lambda^{O(\lg D)}$.
For a unit-length graph $G=(V,E)$ with doubling dimension
$\alpha=\lg_2\lambda$, for any node $v\in V$ the number of nodes within
distance $d$ satisfies
\begin{displaymath}
\abs{\{u:d_G(u,v)\leq d\}} \leq \lambda^{O(\lg d)}. 
\end{displaymath}
The number of edges within $B(s)$ is at most quadratic in the number of nodes. 
Exploring $B(s)$ using BFS thus requires time proportional to $\lambda^{O(\lg d)}$.
For ``far'' pairs, we triangulate using $l(s)$ and $l(t)$, returning an approximate 
distance: The algorithm returns 
$\tilde d(u,v) = d_G(u,l(u)) + \mathcal D(\mathcal L(l(u)),\mathcal L(l(v))) + d_G(l(v),v)$, 
where $\mathcal L(w)$ denotes the label of $w$ 
and $\mathcal D(\cdot,\cdot)$ denotes the decoding function of the labeling scheme. 
A simple calculation (using the triangle inequality and the fact that the distance $d(u,v)$ 
is at least $\ell/\epsilon$) yields that the query result 
$\tilde d(u,v)$ satisfies $\tilde d(u,v)\leq (1+7\epsilon)d(u,v)$ (for any $\epsilon\in(0,1]$). 
\qed
\end{proof}

}

\full{\newpage}

\bibliographystyle{alpha}
\bibliography{planaradq}

\newcommand{\etalchar}[1]{$^{#1}$}
\begin{thebibliography}{AHdLT05}

\bibitem[ABH{\etalchar{+}}04]{ABHVW04}
Umut~A. Acar, Guy~E. Blelloch, Robert Harper, Jorge~L. Vittes, and Shan
  Leung~Maverick Woo.
\newblock Dynamizing static algorithms, with applications to dynamic trees and
  history independence.
\newblock In {\em Proceedings of the Fifteenth ACM-SIAM Symposium on Discrete
  Algorithms}, pages 531--540, 2004.

\bibitem[AG06]{AbrahamG06}
Ittai Abraham and Cyril Gavoille.
\newblock Object location using path separators.
\newblock In {\em Proceedings of the Twenty-Fifth Annual ACM Symposium on
  Principles of Distributed Computing (PODC)}, pages 188--197, 2006.
\newblock Details in LaBRI Research Report RR-1394-06.

\bibitem[AGK{\etalchar{+}}98]{AGKKW98}
Sanjeev Arora, Michelangelo Grigni, David~R. Karger, Philip~N. Klein, and
  Andrzej Woloszyn.
\newblock A polynomial-time approximation scheme for weighted planar graph
  {TSP}.
\newblock In {\em Proceedings of the Ninth Annual ACM-SIAM Symposium on
  Discrete Algorithms (SODA)}, pages 33--41, 1998.

\bibitem[AHdLT05]{AHLT05}
Stephen Alstrup, Jacob Holm, Kristian de~Lichtenberg, and Mikkel Thorup.
\newblock Maintaining information in fully dynamic trees with top trees.
\newblock {\em ACM Transactions on Algorithms}, 1(2):243--264, 2005.

\bibitem[AST94]{AlonST94}
Noga Alon, Paul~D. Seymour, and Robin Thomas.
\newblock Planar separators.
\newblock {\em SIAM Journal on Discrete Mathematics}, 7(2):184--193, 1994.

\bibitem[BGK{\etalchar{+}}10]{journals/corr/abs-1008-1480}
Yair Bartal, Lee-Ad Gottlieb, Tsvi Kopelowitz, Moshe Lewenstein, and Liam
  Roditty.
\newblock Fast, precise and dynamic distance queries.
\newblock {\em CoRR}, abs/1008.1480, 2010.
\newblock To appear in SODA 2011.

\bibitem[Cab06]{conf/soda/Cabello06}
Sergio Cabello.
\newblock Many distances in planar graphs.
\newblock In {\em Proceedings of the Seventeenth Annual ACM-SIAM Symposium on
  Discrete Algorithms (SODA)}, pages 1213--1220, 2006.
\newblock A preprint of the journal version is available in the University of
  Ljubljana preprint series, Vol. 47 (2009), 1089.

\bibitem[CC07]{conf/soda/CabelloC07}
Sergio Cabello and Erin~W. Chambers.
\newblock Multiple source shortest paths in a genus $g$ graph.
\newblock In {\em Proceedings of the Eighteenth Annual ACM-SIAM Symposium on
  Discrete Algorithms, SODA 2007, New Orleans, Louisiana, USA}, pages 89--97,
  2007.

\bibitem[CX00]{stoc/ChenX00}
Danny~Z. Chen and Jinhui Xu.
\newblock Shortest path queries in planar graphs.
\newblock In {\em Proceedings of the ACM Symposium on Theory of Computing
  (STOC)}, pages 469--478, 2000.

\bibitem[CZ00]{journals/algorithmica/ChaudhuriZ00}
Shiva Chaudhuri and Christos~D. Zaroliagis.
\newblock Shortest paths in digraphs of small treewidth. part {I}: Sequential
  algorithms.
\newblock {\em Algorithmica}, 27(3):212--226, 2000.
\newblock Announced at ICALP 1995.

\bibitem[DPZ91]{conf/icalp/DjidjevPZ91}
Hristo Djidjev, Grammati~E. Pantziou, and Christos~D. Zaroliagis.
\newblock Computing shortest paths and distances in planar graphs.
\newblock In {\em Automata, Languages and Programming, 18th International
  Colloquium, ICALP91, Madrid, Spain, July 8-12, 1991, Proceedings}, pages
  327--338, 1991.

\bibitem[DPZ95]{conf/stacs/DjidjevPZ95}
Hristo Djidjev, Grammati~E. Pantziou, and Christos~D. Zaroliagis.
\newblock On-line and dynamic algorithms for shorted path problems.
\newblock In {\em STACS}, pages 193--204, 1995.

\bibitem[DPZ00]{journals/algorithmica/DjidjevPZ00}
Hristo Djidjev, Grammati~E. Pantziou, and Christos~D. Zaroliagis.
\newblock Improved algorithms for dynamic shortest paths.
\newblock {\em Algorithmica}, 28(4):367--389, 2000.

\bibitem[Epp03]{conf/soda/Eppstein03}
David Eppstein.
\newblock Dynamic generators of topologically embedded graphs.
\newblock In {\em Proceedings of the Fourteenth Annual ACM-SIAM Symposium on
  Discrete Algorithms (SODA)}, pages 599--608, 2003.

\bibitem[FR06]{journals/jcss/FakcharoenpholR06}
Jittat Fakcharoenphol and Satish Rao.
\newblock Planar graphs, negative weight edges, shortest paths, and near linear
  time.
\newblock {\em Journal of Computer and System Sciences}, 72(5):868--889, 2006.
\newblock Announced at FOCS 2001.

\bibitem[Fre87]{journals/siamcomp/Frederickson87}
Greg~N. Frederickson.
\newblock Fast algorithms for shortest paths in planar graphs, with
  applications.
\newblock {\em SIAM Journal on Computing}, 16(6):1004--1022, 1987.

\bibitem[Fre97]{Frederickson97}
Greg~N. Frederickson.
\newblock A data structure for dynamically maintaining rooted trees.
\newblock {\em Journal of Algorithms}, 24:37--65, 1997.
\newblock Announced at SODA 1993.

\bibitem[GLNS08]{GudmundssonLevcopoulosNarasimhanSmid}
Joachim Gudmundsson, Christos Levcopoulos, Giri Narasimhan, and Michiel H.~M.
  Smid.
\newblock Approximate distance oracles for geometric spanners.
\newblock {\em ACM Transactions on Algorithms}, 4(1), 2008.
\newblock Announced at SODA and ISAAC 2002.

\bibitem[GPPR04]{journals/jal/GavoillePPR04}
Cyril Gavoille, David Peleg, St{\'e}phane P{\'e}rennes, and Ran Raz.
\newblock Distance labeling in graphs.
\newblock {\em J. Algorithms}, 53(1):85--112, 2004.
\newblock Announced at SODA 2001.

\bibitem[HKRS97]{journals/jcss/HenzingerKRS97}
Monika~Rauch Henzinger, Philip~Nathan Klein, Satish Rao, and Sairam
  Subramanian.
\newblock Faster shortest-path algorithms for planar graphs.
\newblock {\em Journal of Computer and System Sciences}, 55(1):3--23, 1997.
\newblock Announced at STOC 1994.

\bibitem[HPM06]{MendelHarPeled}
Sariel Har-Peled and Manor Mendel.
\newblock Fast construction of nets in low dimensional metrics, and their
  applications.
\newblock {\em SIAM J. Comput.}, 35(5):1148--1184, 2006.
\newblock Announced at SOCG 2005.

\bibitem[KK06]{journals/talg/KowalikK06}
Lukasz Kowalik and Maciej Kurowski.
\newblock Oracles for bounded-length shortest paths in planar graphs.
\newblock {\em ACM Transactions on Algorithms}, 2(3):335--363, 2006.
\newblock Announced at STOC 2003.

\bibitem[KKS11]{TR}
Kenichi Kawarabayashi, Philip~Nathan Klein, and Christian Sommer.
\newblock Linear-space approximate distance oracles for planar, bounded-genus,
  and minor-free graphs.
\newblock {\em CoRR}, abs/????.????, 2011.

\bibitem[Kle05]{conf/soda/Klein05}
Philip~Nathan Klein.
\newblock Multiple-source shortest paths in planar graphs.
\newblock In {\em Proceedings of the Sixteenth Annual ACM-SIAM Symposium on
  Discrete Algorithms (SODA)}, pages 146--155, 2005.

\bibitem[KMW10]{journals/talg/KleinMW10}
Philip~Nathan Klein, Shay Mozes, and Oren Weimann.
\newblock Shortest paths in directed planar graphs with negative lengths: A
  linear-space {$O(n\log^2n)$}-time algorithm.
\newblock {\em ACM Transactions on Algorithms}, 6(2), 2010.
\newblock Announced at SODA 2009.

\bibitem[KP98]{KP98}
Shay Kutten and David Peleg.
\newblock Fast distributed construction of small $k$-dominating sets and
  applications.
\newblock {\em Journal of Algorithms}, 28(1):40--66, 1998.
\newblock Announced at PODC 1995.

\bibitem[KR10]{conf/focs/KawarabayashiR10}
Kenichi Kawarabayashi and Bruce~A. Reed.
\newblock A separator theorem in minor-closed classes.
\newblock In {\em 51st Annual IEEE Symposium on Foundations of Computer
  Science, FOCS 2010}, 2010.

\bibitem[LT79]{LT79}
Richard~J. Lipton and Robert~Endre Tarjan.
\newblock A separator theorem for planar graphs.
\newblock {\em SIAM Journal on Applied Mathematics}, 36(2):177--189, 1979.

\bibitem[MN07]{MendelN07}
Manor Mendel and Assaf Naor.
\newblock Ramsey partitions and proximity data structures.
\newblock {\em Journal of the European Mathematical Society}, 9(2):253--275,
  2007.
\newblock Announced at FOCS 2006.

\bibitem[MS10]{journals/corr/abs-1011-5549}
Shay Mozes and Christian Sommer.
\newblock Exact shortest path queries for planar graphs using linear space.
\newblock {\em CoRR}, abs/1011.5549, 2010.

\bibitem[MWN10]{esa/MozesW10}
Shay Mozes and Christian Wulff-Nilsen.
\newblock Shortest paths in planar graphs with real lengths in
  {$O(n\log^2n/\log\log n)$} time.
\newblock In {\em Algorithms - ESA 2010, 18th Annual European Symposium}, 2010.

\bibitem[MZ07]{distance-oracle-experiment}
Laurent~Flindt Muller and Martin Zachariasen.
\newblock Fast and compact oracles for approximate distances in planar graphs.
\newblock In {\em Proceedings of the 15th annual European Conference on
  Algorithms}, pages 657--668, 2007.

\bibitem[Nus10]{Nussbaum10}
Yahav Nussbaum.
\newblock Improved distance queries in planar graphs.
\newblock {\em CoRR}, abs/1012.2825, 2010.

\bibitem[PR10]{PatrascuRoditty}
Mihai Patrascu and Liam Roditty.
\newblock Distance oracles beyond the {Thorup--Zwick} bound.
\newblock In {\em 51st Annual IEEE Symposium on Foundations of Computer Science
  (FOCS)}, 2010.

\bibitem[RW09]{journals/talg/ReedW09}
Bruce~A. Reed and David~R. Wood.
\newblock A linear-time algorithm to find a separator in a graph excluding a
  minor.
\newblock {\em ACM Transactions on Algorithms}, 5(4), 2009.

\bibitem[Sli07]{SlivkinsPODC05JournalVersion}
Aleksandrs Slivkins.
\newblock Distance estimation and object location via rings of neighbors.
\newblock {\em Distributed Computing}, 19(4):313--333, 2007.
\newblock Announced at PODC 2005.

\bibitem[ST83]{ST83}
Daniel~Dominic Sleator and Robert~Endre Tarjan.
\newblock A data structure for dynamic trees.
\newblock {\em Journal of Computer and System Sciences}, 26(3):362--391, 1983.
\newblock Announced at STOC 1981.

\bibitem[SVY09]{SparseDO}
Christian Sommer, Elad Verbin, and Wei Yu.
\newblock Distance oracles for sparse graphs.
\newblock In {\em 50th Annual IEEE Symposium on Foundations of Computer Science
  (FOCS)}, pages 703--712, 2009.

\bibitem[Tal04]{conf/stoc/Talwar04}
Kunal Talwar.
\newblock Bypassing the embedding: algorithms for low dimensional metrics.
\newblock In {\em Proceedings of the 36th Annual ACM Symposium on Theory of
  Computing (STOC)}, pages 281--290, 2004.

\bibitem[Tho99]{journals/jacm/Thorup99}
Mikkel Thorup.
\newblock Undirected single-source shortest paths with positive integer weights
  in linear time.
\newblock {\em Journal of the ACM}, 46(3):362--394, 1999.
\newblock Announced at FOCS 1997.

\bibitem[Tho00]{journals/jal/Thorup00}
Mikkel Thorup.
\newblock Floats, integers, and single source shortest paths.
\newblock {\em Journal of Algorithms}, 35(2):189--201, 2000.
\newblock Announced at STACS 1998.

\bibitem[Tho04]{ThorupJACM04}
Mikkel Thorup.
\newblock Compact oracles for reachability and approximate distances in planar
  digraphs.
\newblock {\em Journal of the ACM}, 51(6):993--1024, 2004.
\newblock Announced at FOCS 2001.

\bibitem[TMH09]{journals/dam/TazariM09}
Siamak Tazari and Matthias M{\"u}ller-Hannemann.
\newblock Shortest paths in linear time on minor-closed graph classes, with an
  application to {Steiner} tree approximation.
\newblock {\em Discrete Applied Mathematics}, 157(4):673--684, 2009.
\newblock Announced at WG 2008.

\bibitem[TW05]{TW05}
Robert~Endre Tarjan and Renato Fonseca~F. Werneck.
\newblock Self-adjusting top trees.
\newblock In {\em Proceedings of the Sixteenth Annual ACM-SIAM Symposium on
  Discrete Algorithms}, pages 813--822, 2005.

\bibitem[TZ05]{ThorupZwick2005}
Mikkel Thorup and Uri Zwick.
\newblock Approximate distance oracles.
\newblock {\em Journal of the ACM}, 52(1):1--24, 2005.
\newblock Announced at STOC 2001.

\end{thebibliography}

\end{document}